\renewcommand{\epsilon}{\varepsilon}
\newtheorem{theorem}{Theorem}
\newtheorem{corollary}{Corollary}
\newtheorem{lemma}{Lemma}
\newtheorem{definition}{Definition}
\newtheorem{remark}{Remark}
\newtheorem{example}{Example}
\renewcommand{\epsilon}{\varepsilon}
\renewcommand{\phi}{\varphi}
\newcommand{\lf}{\left}
\newcommand{\rg}{\right}
\newcommand{\ii}{\infty}
\newcommand{\pp}[1]{\frac{\partial}{\partial#1}}
\newcommand{\dee}{\,\mathrm{d}}
\newcommand{\sbs}{\subseteq}
\newcommand{\ssbs}{\subset}
\newcommand{\tms}{\times}
\newcommand{\ba}{\bar}
\newcommand{\N}{\mathbb{N}}
\newcommand{\R}{\mathbb{R}}
\newcommand{\Nc}{\mathcal{N}}
\newcommand{\al}{\alpha}
\newcommand{\bt}{\beta}
\newcommand{\te}{\theta}
\newcommand{\kp}{\kappa}
\newcommand{\lm}{\lambda}
\newcommand{\rh}{\rho}
\newcommand{\ta}{\tau}
\newcommand{\ph}{\varphi}
\newcommand{\ps}{\psi}
\newcommand{\Om}{\Omega}
\newcommand{\textall}{\,\,\text{for all }}
\newcommand{\K}{\mathcal{K}}
\newcommand{\KL}{\mathcal{K}\mathcal{L}}
\newcommand{\cbf}{h}
\newcommand{\vcbf}{h}
\newcommand{\cbvf}{v}
\newcommand{\barrier}{B}
\newcommand{\terminalpayoff}{g}
\newcommand{\Rn}{\R^n}
\newcommand{\Rm}{\R^m}
\newcommand{\Rge}{\R_{\ge 0}}
\newcommand{\xsig}{\mathbf{x}}
\newcommand{\usig}{\mathbf{u}}
\newcommand{\uvals}{\mathcal{U}}
\newcommand{\usigs}{\mathbb{U}}
\newcommand{\safeset}{\mathcal{S}}
\newcommand{\traj}{\xsig_x^\usig}
\newcommand{\tildetraj}{\xsig_x^{\tilde{\usig}}}
\newcommand{\startraj}{\xsig_x^{\usig^*}}
\newcommand{\pdesol}{\mu}
\newcommand{\hamiltonian}{H_\al}
\newcommand{\vanillahamiltonian}{H}
\title{\LARGE \bf
Viscosity CBFs: Bridging the Control Barrier Function and Hamilton-Jacobi Reachability Frameworks in Safe Control Theory
}
\author{Dylan Hirsch, Jaime Fern\'{a}ndez Fisac, and Sylvia Herbert
\thanks{Research reported in this publication was supported by NIBIB of the National Institutes of Health under award number T32EB009380. The content is solely the responsibility of the authors and does not necessarily represent the official views of the National Institutes of Health.
This research was 100\% financed by Federal Grants.} %
\thanks{Dylan Hirsch (corresponding author) and Sylvia Herbert are with the Department of Mechanical and Aerospace Engineering, University of California at San Diego, 9500 Gilman Drive, La Jolla, CA 92093.
        {\tt\small dhirsch@ucsd.edu, sherbert@ucsd.edu.}}%
\thanks{Jaime Fisac is with the Department of Mechanical and Aerospace Engineering, Princeton University, Princeton, NJ 08544. {\tt\small jfisac@princeton.edu.}}%
}
\begin{document} 

\maketitle
\thispagestyle{empty}
\pagestyle{empty}

\begin{abstract}
Control barrier functions (CBFs) and Hamilton-Jacobi reachability (HJR) are central frameworks in safe control.
Traditionally, these frameworks have been viewed as distinct, with the former focusing on optimally safe controller design and the latter providing sufficient conditions for safety.
A previous work introduced the notion of a control barrier value function (CB-VF), which is defined similarly to the other value functions studied in HJR but has certain CBF-like properties.
In this work, we proceed the other direction by generalizing CBFs to non-differentiable ``viscosity'' CBFs.
We show the deep connection between viscosity CBFs and CB-VFs, bridging the CBF and HJR frameworks.
Through this bridge, we characterize the viscosity CBFs as precisely those functions which provide CBF-like safety guarantees (control invariance and smooth approach to the boundary).
We then further show nice theoretical properties of viscosity CBFs, including their desirable closure under maximum and limit operations.
In the process, we also extend CB-VFs to non-exponential anti-discounting and update the corresponding theory for CB-VFs along these lines.
\end{abstract}

\section{Introduction}

Control barrier functions (CBFs) and Hamilton-Jacobi reachability (HJR) are two primary theoretical frameworks for safe control \cite{Ames-Tabuada-TAC-CBFs-QP-2017,Ames-Tabuada-ECC-CBFs-Theory-and-Applications-2019,Mitchell-Tomlin-TAC-HJR-2005,Bansal-Tomlin-CDC-HJR-Overview-2017}.
The central notion in HJR is the value function, a scalar function that describes the ability of an optimal controller to achieve a task.
The payoff functionals used in HJR (e.g. maximum-over-time, minimum-over-time) correspond to various types of tasks (e.g. target-reaching, obstacle-avoidance) \cite{Mitchell-Tomlin-TAC-HJR-2005,Margellos-Lygeros-TAC-Reach-Avoid-2011,Fisac-Sastry-Reach-Avoid-2015}.
HJR is \textit{constructive} in that once the payoff is chosen, the value function is uniquely specified.
The value functions considered in HJR are usually time-dependent (although time-independent alternatives also exist \cite{Altarovici-Zidani-ESAIM-State-Constrained-2012}), are typically non-differentiable, and are characterized by a certain weak (``viscosity'') solution to a Hamilton-Jacobi Partial Differential Equation (HJ-PDE).

CBFs are also scalar functions that encode safety information, but they are defined based on the satisfaction of the ``CBF inequality,'' a particular partial differential inequality. 
CBFs provide sufficient conditions to certify not only that a certain set is control-invariant, but that one can avoid approaching the boundary of this set too quickly.
Like Lyapunov functions, CBFs are \textit{non-constructive} in the sense that there may be many CBFs for a system and there is no single standard approach to obtain a particular one which provides desired safety guarantees.
Moreover, CBFs are usually time-independent (although time-varying extensions exist \cite{Xiao-Christos-TAC-Adaptive-CBFs-2022}) and continuously differentiable (although notions of non-differentiable CBFs also exist \cite{Glotfelter-Egerstedt-LCSS-NBFs-2017}, \cite{Glotfelter_2021}).

In \cite{Choi-Herbert-CDC-CB-VFs-2021}, the authors introduce control barrier value functions (CB-VFs) by anti-discounting the payoff for an obstacle-avoidance task.
The CB-VF is a value function in the sense of HJR, but it has several CBF-like properties.

In this work, we proceed in the other direction by introducing the notion of a viscosity CBF, a generalization of a CBF that need not be differentiable (and indeed generalizes the non-differentiable CBFs in \cite{Glotfelter-Egerstedt-LCSS-NBFs-2017}, \cite{Glotfelter_2021}).
Unlike CB-VFs, viscosity CBFs are defined in the spirit of traditional CBFs, i.e. via a solution to a CBF inequality. Whereas traditional CBFs must satisfy this inequality in a classical sense, we only require viscosity CBFs to do so in a certain weak (``viscosity'') sense.

\begin{figure}[!t]
    \centering
    \includegraphics[width=\columnwidth]{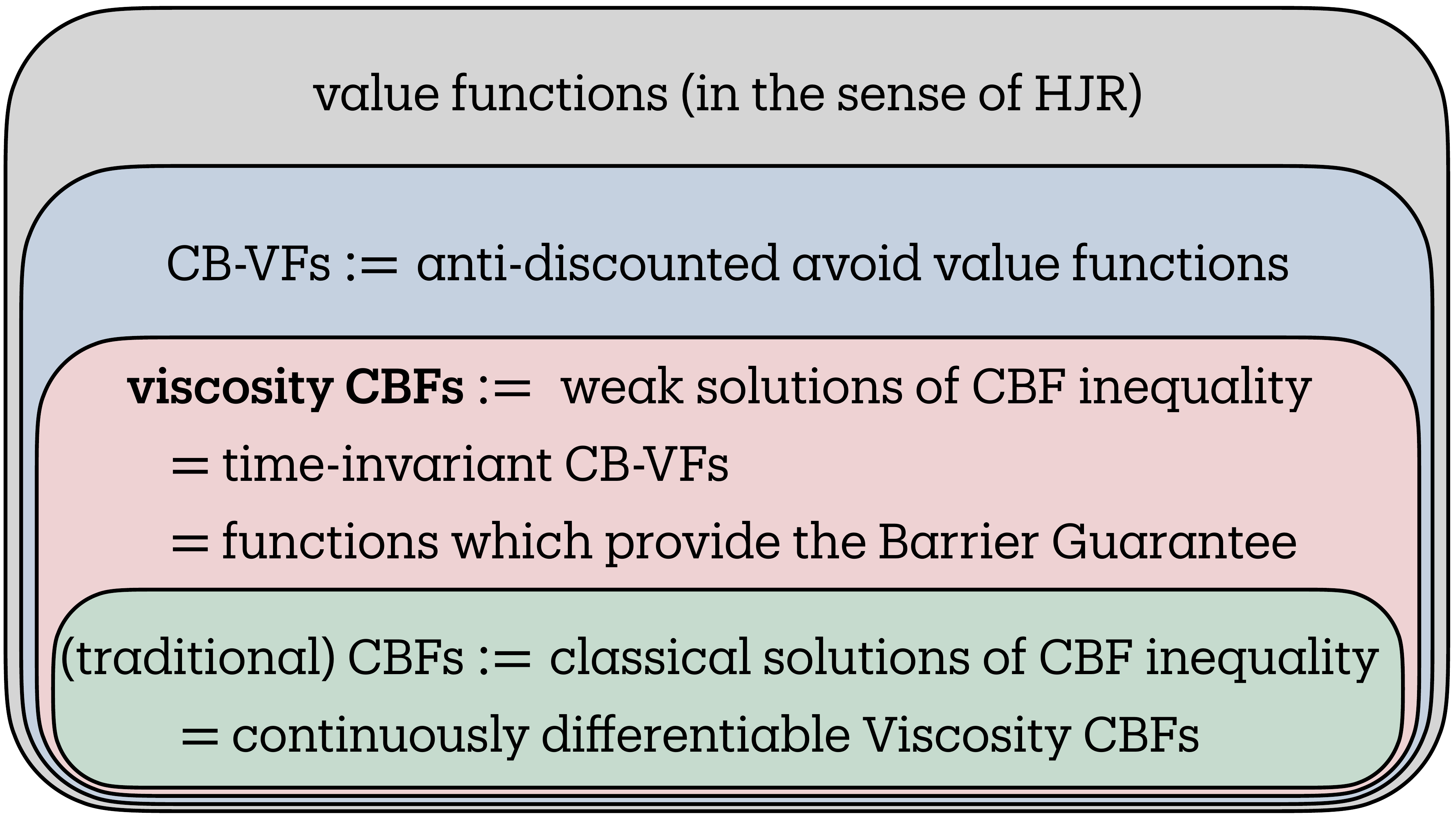}
    \caption{\textbf{Graphical abstract:} in this work, we bridge Hamilton-Jacobi reachability (HJR) and control barrier function (CBFs).
    To do so, we extend the work on control barrier value functions (CB-VFs) in \cite{Choi-Herbert-CDC-CB-VFs-2021} and also introduce the notion of a viscosity CBF.
    A CB-VF is defined similarly to an avoid value function (a value function used in HJR for obstacle-avoidance tasks), except that it is anti-discounted according to a class $\mathcal{K}$ function $\alpha$.
    A viscosity CBF is similar to a standard CBF, except that the usual CBF inequality is only required to hold in a weak (``viscosity'') sense.
    We show that a viscosity CBF is equivalent to a time-invariant CB-VF.
    We then show the set of viscosity CBFs is precisely the set of continuous functions which provide the Barrier Guarantee, i.e. certify control invariance of a set and bounds on the speed at which the system can approach the set boundary.
    }
    \label{fig:graphical-abstract}
\end{figure}

We show that the set of viscosity CBFs for a system are in fact precisely the set of time-invariant CB-VFs for the system, theoretically bridging the CBF and HJR frameworks.
Through this bridge, we demonstrate that we can characterize the set of all continuous functions (including those which are non-differentiable and even not locally Lipschitz) that can provide CBF-like guarantees (Fig. \ref{fig:graphical-abstract}).

We then proceed to further study these viscosity CBFs, including describing how one can synthesize new viscosity CBFs by taking limits of known viscosity CBFs or traditional CBFs.
In the process, we also extend the notion of a CB-VF beyond exponential anti-discounting to anti-discounting determined by nonlinear class $\mathcal{K}$ functions.

We note that in \cite{Camilli_2008}, the authors take a viscosity-based approach to control Lyapunov functions.
Although related, this work differs mathematically and conceptually from the present one.
In particular, the authors use a positive definite function of the state to encourage stabilization to a point, whereas our work uses a class $\K$ function of the value to prevent rapid approach of the boundary, fundamentally changing the underlying HJ-PDE.
Additionally, the authors in \cite{Camilli_2008} analyze an optimal control problem with a running-cost payoff, whereas we analyze one with a CB-VF payoff to make the connection between HJR and CBFs.

\section{Problem Setting and Preliminaries}

In this work we consider a dynamical system
\begin{equation}\label{eqn:dynamics}
    \dot{\xsig} = f(\xsig,\usig),
\end{equation}
where $f:\Rn \tms \uvals \to \Rn$ is Lipschitz, with the set of allowed control input values $\uvals \ssbs \Rn$ assumed compact.
Here, $\xsig$ is the system's state and $\usig$ is the control input (we use the boldface to distinguish the state trajectory $\xsig$ and input signal $\usig$ from a state value $x \in \Rn$ and input value $u \in \uvals$).

We will denote the set of measurable functions $\usig: \Rge \to \uvals$ by $\usigs$, representing the allowed control signals.

\begin{definition}[Class $\K$; Definition 4.2 in Chapter 4.4 of \cite{Khalil-Nonlinear-Systems}] A function $\al: \Rge \to \Rge$ is \textup{class} $\K$ if it is continuous, is strictly increasing, and satisfies $\al(0) = 0$.
\end{definition}

\begin{definition}[Class $\KL$; Definition 4.3 in Chapter 4.4 of \cite{Khalil-Nonlinear-Systems}]
    A function $\bt:\Rge \tms \Rge \to \Rge$ is class $\KL$ if it is continuous and (i) $\bt(\cdot,t)$ is class $\K$ for each $t \ge 0$, (ii) $\bt(r,\cdot)$ is non-increasing for each $r \ge 0$, and (iii) $\bt(r,t) \to 0$ as $t \to \ii$ for each $r \ge 0$.
\end{definition}

By Lemma 4.4 of \cite{Khalil-Nonlinear-Systems}, if a class $\K$ function $\al$ is locally Lipschitz, we can associate it with a class $\KL$ function $\bt_\al$ by for each $r \ge 0$ defining $\bt_\al(r,\cdot)$ to be the solution of the initial value problem $\dot{y} = -\al(y)$, $y(0) = r$.

\begin{definition}[Control Invariant Set]
    A set $\safeset \sbs \Rn$ is control invariant (with respect to $f$) if for each $x \in \safeset$ there is a $\usig \in \usigs$ such that $\traj(t) \in \safeset$ for all $t \ge 0$.
\end{definition}

\begin{definition}[Control Barrier Function]\label{def:control-barrier-function}
    A function $\cbf: \Rn \to \R$ is a \textup{control barrier function} with respect to a class $\K$ function $\al$ if it is continuously differentiable and
    \begin{equation}\label{eqn:cbf-inequality}\tag{CBF inequality}
        \max_{u \in \uvals} \nabla \cbf(x) \cdot f(x,u) \ge - \al(\cbf(x))
    \end{equation}
    for each $x \in \Rn$ satisfying $\cbf(x) > 0$.
\end{definition}

\begin{remark}
    Some definitions of CBFs do not impose demands on the behavior of the system outside of the control-invariant set, while others require that if the system begins outside of this set it can return sufficiently quickly toward the boundary.
    For ease of exposition, we adopt the former definition (hence we only require that the CBF inequality holds when $\cbf(x) > 0$).
    We note, however, that analogous results can also be shown in a similar fashion for the latter definition (which will be saved for future work).
\end{remark}

\section{The Barrier Guarantee}
While CBFs can be used to certify control-invariance of a set, if one is solely concerned with this objective, Nagumo's theorem can be used instead.
The true upsell of CBFs is their additional guarantees, in particular bounding the rate at which a system can approach the boundary of a safe set (in some respects, it is this property that enables safety filter design using CBFs).
Motivated by this idea, we introduce the notion of a Barrier Guarantee (c.f. Definition 4 in \cite{Glotfelter-Egerstedt-LCSS-NBFs-2017}).

\begin{definition}[Barrier Guarantee]\label{def:barrier-guarantee}
    A continuous function $\barrier:\Rn \to \R$ provides the \textup{Barrier Guarantee} with respect to a locally Lipschitz class $\K$ function $\al$ if for each $\te \in [0,1)$ and each $x \in \Rn$ satisfying $\barrier(x) > 0$, there is a $\usig \in \usigs$ such that for all $t \ge 0$ we have
    \begin{equation}\label{eqn:barrier-guarantee}
        \barrier \lf( \traj(t)  \rg) \ge \bt_\al(\te \barrier(x), t).
    \end{equation}
\end{definition}

\begin{remark}\label{remark:barrier-guarantee-motivation}
    To motivate the preceding definition, suppose as in Theorem 2 of \cite{Ames-Tabuada-ECC-CBFs-Theory-and-Applications-2019}, we have a (traditional) CBF $h:\Rn \to \R$ and that we can find a Lipschitz feedback law $u = k(x)$ such that for each $x \in \Rn$ we have
    $$\nabla h(x) \cdot f(x, k(x)) \ge -\al\lf(h(x)\rg).$$
    Let $\xsig$ be some trajectory of the closed loop system.
    Recognizing that the left-hand-side of the above inequality is simply $\frac{d}{dt}h(\xsig(t))$, the comparison principle then gives
    $$h\lf(\xsig(t)\rg) \ge \bt_\al(\cbf(x), t)$$
    for all $t \ge 0$, where $x := \xsig(0)$.
    The above inequality is almost identical to \eqref{eqn:barrier-guarantee} with $\barrier = \cbf$, except that in \eqref{eqn:barrier-guarantee} the multiplier $\te$ appears for technical reasons related to the fact that a safe Lipschitz controller for a CBF may not exist.
    
    For an example when a safe Lipschitz controller does not exist, consider the CBF $\cbf(x) = 1 - x^2$ for the scalar system $\dot{\xsig} = \usig$ with the admissible control set $\uvals = \{-1,+1\}$.
    There are only two Lipschitz controllers: $k(x) \equiv +1$ and $k(x) \equiv -1$, neither of which is safe.
\end{remark}

If a function $\barrier$ provides the Barrier Guarantee, it immediately implies that the strict zero super-level set $\safeset:= \{x \mid \barrier(x) > 0\}$ of $\barrier$ is control invariant.
(Note that we consider throughout this work the \textit{strict} zero super-level because the existence of even a traditional CBF does not in fact guarantee control-invariance of the \textit{non-strict} zero super-level set without additional assumptions, e.g. control-affine dynamics and convexity of $\uvals$; see Appendix \ref{subsection:appendix-counter-example} for a counter-example.)
Beyond just control invariance, however, as demonstrated in Remark \ref{remark:barrier-guarantee-motivation}, the Barrier Guarantee provides a CBF-like bound on how fast the system can approach the boundary $\partial \safeset$ of this set whenever it is initialized within $\safeset$.

While any CBF provides the Barrier Guarantee, the continuous differentiability requirements of traditional CBFs are stronger than needed to guarantee this property holds.
Our primary goal in this work will be to characterize all those functions which provide this guarantee.
To do so, we will introduce a certain notion of a non-differentiable CBF, slightly more general than that introduced in \cite{Glotfelter-Egerstedt-LCSS-NBFs-2017} and having several nice theoretical properties.
It will turn out that it is precisely these more general ``viscosity'' CBFs which provide the Barrier Guarantee.
To establish this fact, we will need to elucidate the deep connection between CBFs and the value functions of HJR.
In some sense, characterizing those functions which provide the Barrier Guarantee then serves as an example of the utility of bridging CBF and HJR theory.

The following easily verified fact will be of later use.
\begin{lemma}\label{lemma:non-negative-barrier-guarantee}
    Let $\barrier:\Rn \to \R$ be continuous and $\al$ be locally Lipschitz class $\K$.
    Then $\barrier$ provides the Barrier Guarantee w.r.t. $\al$ if and only if $\max\{0,\barrier\}$ provides the Barrier Guarantee w.r.t. $\al$.
\end{lemma}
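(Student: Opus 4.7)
The plan is to prove the two directions separately, with the key fact being that $\beta_\alpha(r,t) \ge 0$ for $r \ge 0$, with strict positivity whenever $r > 0$. This positivity follows from uniqueness of solutions to $\dot y = -\alpha(y)$: since $\alpha$ is locally Lipschitz with $\alpha(0)=0$, the constant $y \equiv 0$ is the unique trajectory through $0$, so a trajectory starting at $r>0$ can never reach zero in finite time.

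For the forward direction, suppose $\barrier$ provides the Barrier Guarantee. Fix $\te \in [0,1)$ and $x$ with $\max\{0,\barrier\}(x) > 0$, which forces $\barrier(x) > 0$ and hence $\max\{0,\barrier\}(x) = \barrier(x)$. Applying the Barrier Guarantee for $\barrier$ yields a $\usig \in \usigs$ with $\barrier(\traj(t)) \ge \bt_\al(\te \barrier(x),t)$ for all $t \ge 0$. Since the right-hand side is non-negative, $\barrier(\traj(t)) \ge 0$ along the trajectory, so $\max\{0,\barrier\}(\traj(t)) = \barrier(\traj(t))$. Combining these two equalities with the inequality gives the Barrier Guarantee for $\max\{0,\barrier\}$ with the same $\usig$.

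The reverse direction is where one has to be slightly careful, because the trivial bound $\max\{0,\barrier\} \ge 0$ cannot be inverted to a bound on $\barrier$. The idea is that as long as $\max\{0,\barrier\}(\traj(t))$ is \emph{strictly} positive, it equals $\barrier(\traj(t))$. So fix $\te \in [0,1)$ and $x$ with $\barrier(x) > 0$, hence $\max\{0,\barrier\}(x) = \barrier(x) > 0$. If $\te > 0$, apply the Barrier Guarantee for $\max\{0,\barrier\}$ at the same $\te$: the resulting bound $\bt_\al(\te \barrier(x),t)$ is strictly positive for all $t$, forcing $\max\{0,\barrier\}(\traj(t)) = \barrier(\traj(t))$ along the trajectory and giving the desired bound on $\barrier$. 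If $\te = 0$, instead apply the Barrier Guarantee for $\max\{0,\barrier\}$ at some fixed $\te' \in (0,1)$ (say $\te' = 1/2$): the same argument gives $\barrier(\traj(t)) \ge \bt_\al(\te' \barrier(x),t) \ge 0 = \bt_\al(0,t)$, which is the required bound.

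The only subtle step is the $\te=0$ case in the reverse direction; once one notices that the Barrier Guarantee ranges over \emph{all} $\te \in [0,1)$, and that a larger $\te$ supplies a stronger pointwise lower bound on the trajectory, the difficulty dissolves. No compactness or limit arguments are needed, so I expect the full proof to be only a few lines.
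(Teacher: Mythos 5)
Your proof is correct. The paper states this lemma as an ``easily verified fact'' and supplies no proof, so there is no argument to compare against; your write-up is a complete verification. You correctly isolate the only nontrivial points: that $\bt_\al(r,\cdot)$ stays strictly positive for $r>0$ (by uniqueness of solutions to $\dot y=-\al(y)$ with $\al$ locally Lipschitz), which lets you convert the bound on $\max\{0,\barrier\}$ back into a bound on $\barrier$ in the reverse direction, and the $\te=0$ case, which you handle by borrowing the guarantee at a larger $\te'$ and using monotonicity of $\bt_\al(\cdot,t)$. (In the forward direction the detour through $\barrier(\traj(t))\ge 0$ is not even needed, since $\max\{0,\barrier\}\ge\barrier$ pointwise.)
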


\section{Viscosity CBFs and Control Barrier Value Functions}

\subsection{Viscosity CBFs}

To characterize the functions which provide the Barrier Guarantee and also to connect CBFs and HJR, we introduce the key mathematical object of this work: the viscosity CBF.

\begin{definition}[Viscosity CBF] \label{def:viscosity-cbfs}
    A continuous function $\vcbf: \Rn \to \R$ is a \textit{viscosity CBF} with respect to a class $\K$ function $\al$ if it is continuous and for each $x \in \Rn$, the inequality
    \begin{equation}\label{eqn:viscosity-cbf-1}
        \max_{u \in \uvals} \nabla \vcbf(x) \cdot f(x,u) \ge - \al(\vcbf(x))
    \end{equation}
    holds in the viscosity sense at each $x \in \Rn$ for which $\vcbf(x) > 0$.
    By ``in the viscosity sense,'' we mean that for any continuously differentiable function $\ph:\Rn \to \R$ such that $\vcbf - \ph$ has a local maximum at $x$, we have
    \begin{equation}\label{eqn:viscosity-cbf-2}
        \max_{u \in \uvals} \nabla \ph(x) \cdot f(x,u) \ge - \al(\vcbf(x)).
    \end{equation}
    (``In the viscosity sense'' is standard terminology in PDE literature; see e.g. Chapter II.5.5 in \cite{Bardi-Dolcetta-Optimal-Control}.)
\end{definition}

\begin{example}
    Consider the scalar system with $\dot{\xsig} = \xsig + \frac{\xsig+\xsig^3}{1 + |\xsig|} \usig$, with admissible control set $\uvals = [-1,+1]$.
    Let $\vcbf:\R \to \R$ be the signed distance function to an unsafe set $(-\ii,-1] \cup [+1,+\ii)$, i.e. $\vcbf(x) = 1 - |x|$.
    
    This function is not a CBF as it is not differentiable at $0$, but we show it is a viscosity CBF with respect to $\al(r) = r$.
    One can check directly that \eqref{eqn:viscosity-cbf-1} holds in a classical sense at each $x \in (-1,0) \cup (0,+1)$, and thus it also holds in a viscosity sense here (Lemmas II.1.7(a) and II.1.8(b) in \cite{Bardi-Dolcetta-Optimal-Control}).

    At the point $x = 0$ on the other hand, it is immediate that \eqref{eqn:viscosity-cbf-2} holds for every continuously differentiable $\ph:\R \to \R$, and thus it certainly holds for such $\ph$ for which $\vcbf - \ph$ has a local maximum at $0$.
    
\end{example}

Note that if a viscosity CBF happens to be continuously differentiable, then it is also a traditional CBF (see Lemmas II.1.7(a) and II.1.8(b) in \cite{Bardi-Dolcetta-Optimal-Control}).
Similarly, all traditional CBFs are viscosity CBFs (see Proposition II.1.3(b) in \cite{Bardi-Dolcetta-Optimal-Control}).
It will often be most convenient to work with non-negative viscosity CBFs.
Doing so is made simple thanks to the following lemma, whose proof is trivial.
\begin{lemma}\label{lemma:non-negative-viscosity-cbf}
    Let $\vcbf:\Rn \to \R$ be continuous and $\al$ be class $\K$.
    Then $\vcbf$ is a viscosity CBF w.r.t. $\al$ if and only if $\max\{0,\vcbf\}$ is a viscosity CBF w.r.t. $\al$.
\end{lemma}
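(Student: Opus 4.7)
The plan is to exploit the fact that the viscosity CBF inequality at a point $x$ is a purely local property, and that wherever $\vcbf(x) > 0$, the functions $\vcbf$ and $\max\{0,\vcbf\}$ agree on an entire neighborhood of $x$ by continuity.

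First I would set $\tilde{\vcbf} := \max\{0,\vcbf\}$ and make the following key observation: if $x \in \Rn$ satisfies $\vcbf(x) > 0$, then by continuity of $\vcbf$ there exists an open neighborhood $U$ of $x$ on which $\vcbf > 0$, and hence $\tilde{\vcbf} \equiv \vcbf$ on $U$. Likewise, if $\tilde{\vcbf}(x) > 0$, then $\vcbf(x) = \tilde{\vcbf}(x) > 0$, and the same conclusion holds. So the set $\{x : \vcbf(x) > 0\}$ is exactly equal to $\{x : \tilde{\vcbf}(x) > 0\}$, and near any such point the two functions are identical.

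For the forward direction, assume $\vcbf$ is a viscosity CBF w.r.t. $\al$. Fix any $x$ with $\tilde{\vcbf}(x) > 0$ and any $C^1$ test function $\ph$ such that $\tilde{\vcbf} - \ph$ has a local maximum at $x$. Using the neighborhood $U$ from above, $\vcbf - \ph = \tilde{\vcbf} - \ph$ on $U$, so $\vcbf - \ph$ also has a local maximum at $x$. Since $\vcbf(x) = \tilde{\vcbf}(x) > 0$, the viscosity CBF property of $\vcbf$ yields
\begin{equation*}
    \max_{u \in \uvals} \nabla \ph(x) \cdot f(x,u) \ge -\al(\vcbf(x)) = -\al(\tilde{\vcbf}(x)),
\end{equation*}
which is exactly the required viscosity inequality for $\tilde{\vcbf}$ at $x$. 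The reverse direction is completely symmetric, swapping the roles of $\vcbf$ and $\tilde{\vcbf}$ in the argument.

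There is no real obstacle here; the only subtlety to flag is making sure one does not need to check the viscosity inequality at points where $\tilde{\vcbf}(x) = 0$ (where $\tilde{\vcbf}$ has a potentially nontrivial subdifferential structure from being the max of two functions), but Definition \ref{def:viscosity-cbfs} explicitly restricts the condition to points with positive value, so such points are irrelevant. The equivalence then follows immediately from the localization argument above.
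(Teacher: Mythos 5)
Your proof is correct, and it is precisely the localization argument the paper has in mind when it declares the proof ``trivial'' and omits it: since the viscosity CBF condition is only imposed at points where the function is positive, and $\vcbf$ and $\max\{0,\vcbf\}$ coincide on a neighborhood of every such point (so they share the same positivity set, the same test functions with a local maximum there, and the same value in the right-hand side $-\al(\cdot)$), the equivalence is immediate. Nothing is missing.
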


Viscosity CBFs are a particular way of generalizing traditional CBFs to non-differentiable functions.
They are slightly more general than the non-smooth CBFs introduced in \cite{Glotfelter-Egerstedt-LCSS-NBFs-2017}, permitting even CBFs that are not locally Lipschitz.
The purpose of allowing for such exotic non-differentiable CBFs is not out of mathematical interest but for their useful theoretical properties and direct connection to HJR and the Barrier Guarantee.
To establish these properties and connections, we must introduce a certain kind of value function from HJR.

\subsection{Viscosity solutions of Hamilton Jacobi Partial Differential Equations}

Before proceeding, we briefly review the notion of viscosity solutions.
A viscosity solution is a specific type of weak solution of a HJ-PDE which need not be continuously differentiable (often too restrictive for existence) but satisfies more than just differentiability almost everywhere (often too loose for uniqueness).
\begin{definition}[Viscosity solution]
Let $\Om \sbs \Rm$ be open, let $M \sbs \R$ be closed, and let $F:\Om \tms M \tms \Rm \to \R$ be continuous.
Consider the HJ-PDE
\begin{equation*}
    F\lf(z,\pdesol, D \pdesol\rg) = 0.
\end{equation*}
A continuous function $\pdesol:\Om \to M$ is a 
\begin{itemize}
\item \textup{viscosity sub-solution} of this PDE if for each continuously differentiable function $\ph:\Om \to \R$, we have 
$F(z_0,\pdesol(z_0), D \ph(z_0)) \le 0$ for any $z_0 \in \Om$ at which $\pdesol - \ph$ has a local maximum.

\item \textup{viscosity super-solution} of this PDE if for each continuously differentiable function $\ph:\Om \to \R$, we have 
$F(z_0,\pdesol(z_0), D \ph(z_0)) \ge 0$ for any $z_0 \in \Om$ at which $\pdesol - \ph$ has a local minimum.

\item \textup{viscosity solution} of this PDE if it is both a viscosity sub-solution and super-solution.
\end{itemize}

\end{definition}
Notationally, for the time-dependent PDEs considered in this work, it will be the case that $\pdesol = \cbvf$, $z = (x,T)$, $m = n+1$, $\Om = \Rn \tms (0,\ii)$, $M = \Rge$, and $D = (\nabla,\pp{T})$.
For the time-independent PDEs considered, it will be the case that $\pdesol = \vcbf$, $z = x$, $m = n$, $\Om = \Rn$, $M = \Rge$, and $D = \nabla$.
For a more thorough introduction to viscosity solutions of HJ-PDEs, we refer the reader to \cite{Evans-PDEs,Bardi-Dolcetta-Optimal-Control,Fleming-Soner-Viscosity}.

When we work in particular with non-negative viscosity CBFs, we get the following useful characterization.
\begin{lemma}\label{lemma:viscosity-cbf-is-hj-pde-sub-solution}
    Let $\vcbf:\Rn \to \Rge$ be continuous and $\al$ be class $\K$.
    Define the Hamiltonian $\hamiltonian:\Rn \tms \Rge \tms \Rn \to \R$ by 
    \begin{equation}\label{eqn:hamiltonian}
        \hamiltonian(x,r,\lm) = \max_{u \in \uvals} \lm \cdot f(x,u) + \al(r).
    \end{equation}
    Then $\vcbf$ is a viscosity CBF w.r.t. $\al$ if and only if it is a viscosity sub-solution of $-\hamiltonian(x,\vcbf,\nabla \vcbf) = 0$.
\end{lemma}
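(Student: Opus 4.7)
The plan is to unpack both definitions and observe that, once $\vcbf$ is non-negative, the only difference between the viscosity-CBF condition and the sub-solution condition lives at points where $\vcbf(x_0) = 0$, and at such points the sub-solution inequality is automatic.

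First I would rewrite the sub-solution condition explicitly. A continuous $\vcbf$ is a viscosity sub-solution of $-\hamiltonian(x,\vcbf,\nabla \vcbf) = 0$ iff for every $C^1$ test function $\ph$ and every local maximum $x_0$ of $\vcbf - \ph$, one has $-\hamiltonian(x_0,\vcbf(x_0),\nabla\ph(x_0)) \le 0$, which rearranges directly via \eqref{eqn:hamiltonian} to
\begin{equation*}
    \max_{u \in \uvals} \nabla \ph(x_0) \cdot f(x_0,u) \ge -\al(\vcbf(x_0)).
\end{equation*}
This is precisely \eqref{eqn:viscosity-cbf-2}. So the sub-solution property asks exactly for \eqref{eqn:viscosity-cbf-2} at \emph{every} $x_0 \in \Rn$, while the viscosity CBF property asks for it only at those $x_0$ with $\vcbf(x_0) > 0$. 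The reverse direction (sub-solution $\Rightarrow$ viscosity CBF) is then immediate since the CBF condition is strictly weaker.

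For the forward direction, suppose $\vcbf$ is a viscosity CBF. Let $\ph \in C^1$ and let $x_0$ be a local maximum of $\vcbf - \ph$. If $\vcbf(x_0) > 0$, the viscosity CBF condition gives \eqref{eqn:viscosity-cbf-2} directly. The only remaining case, using $\vcbf \ge 0$, is $\vcbf(x_0) = 0$. Here I would exploit that $x_0$ must be a local minimum of $\ph$: for $x$ near $x_0$,
\begin{equation*}
    \ph(x) - \ph(x_0) \ge \vcbf(x) - \vcbf(x_0) = \vcbf(x) \ge 0,
\end{equation*}
so $\nabla \ph(x_0) = 0$. Consequently $\max_{u \in \uvals} \nabla \ph(x_0) \cdot f(x_0,u) = 0 = -\al(0) = -\al(\vcbf(x_0))$, and \eqref{eqn:viscosity-cbf-2} holds at $x_0$ as well. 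Hence $\vcbf$ is a sub-solution of the stated HJ-PDE.

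There is no real obstacle here beyond being careful with the at-the-boundary case $\vcbf(x_0)=0$: this is where the non-negativity hypothesis does the work, turning what looks like a gap (the CBF inequality is only imposed on $\{\vcbf > 0\}$) into an automatic consequence of $\vcbf \ge 0$ and of $\al(0) = 0$ (from the class $\K$ property). I would state the argument in exactly the two bullets above and keep the writeup short, since each implication is essentially a one-line rearrangement once the definitions are aligned.
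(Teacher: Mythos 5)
Your proposal is correct and follows essentially the same route as the paper's own proof: the reverse implication is noted to be immediate, and the forward direction splits into the cases $\vcbf(x_0)>0$ (definition of viscosity CBF) and $\vcbf(x_0)=0$, where non-negativity of $\vcbf$ forces $\ph$ to have a local minimum at $x_0$ and hence $\nabla\ph(x_0)=0$. No gaps.
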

\begin{remark}
In the above HJ-PDE, the negative signs must be retained as negating both sides of an HJ-PDE can result in unwanted changes to its viscosity sub/super-solutions.
\end{remark}
\begin{proof}
    The reverse implication is immediate, so suppose that $\vcbf$ is a viscosity CBF w.r.t. $\al$.
    Fix $x_0 \in \Rn$, and let $\ph:\Rn \to \R$ be a continuously differentiable function such that $\vcbf - \ph$ has a local maximum at $x_0$.
    We wish to show that
    \begin{equation}\label{proof:viscosity-cbf-subsolution-wwts-1}
        -\max_{u \in \uvals} \nabla \ph(x_0) \cdot f(x_0,u) - \al(\vcbf(x_0)) \le 0.
    \end{equation}
    There are two cases: either $\vcbf(x_0) > 0$ or $\vcbf(x_0) = 0$.
    In the first case, \eqref{proof:viscosity-cbf-subsolution-wwts-1} follows directly from the definition of a viscosity CBF.
    In the second case, since $\vcbf$ is non-negative, it follows that $\vcbf$ has a minimum at $x_0$.
    But since $\ph - \vcbf$ has a local minumum at $x_0$, it follows that $\ph$ has a local minimum at $x_0$ as well.
    Thus $\nabla \ph(x_0) = 0$, so \eqref{proof:viscosity-cbf-subsolution-wwts-1} follows.
\end{proof}

\subsection{Avoid value functions and control barrier value functions}

In HJR, the standard value function used for obstacle avoidance tasks is the ``avoid'' value function.
Given a continuous function $\terminalpayoff:\Rn \to \R$, the avoid function $V:\Rn \tms \Rge \to \R$ corresponding to $\terminalpayoff$ is defined by
\begin{equation*}
    V(x,T) = \sup_{\usig \in \usigs} \min_{t \in [0,T]} g(\traj(t)).
\end{equation*}
In applications, we choose $g$ to be non-positive within and only within the obstacle so that $V(x,T) > 0$ if and only if the system, initialized at $x$, can avoid the obstacle until time $T$.
We refer to such a $g$ as an ``immediate safety function.''

In \cite{Choi-Herbert-CDC-CB-VFs-2021}, the authors define a time-varying value function having some CBF-like properties, which they call a control barrier value function (CB-VF).
A CB-VF is similar to an avoid value function except that it is anti-discounted according to a function $\al$.
As the previous work only considers exponential anti-discounting, to fully capture the striking connection between CBFs and HJR, we must extend the CB-VF definition to more general anti-discounting, i.e. as specified by a locally Lipschitz class $\K$ function $\al$ (where exponential anti-discounting corresponds to $\al(r) = \gamma r$).

Given $\terminalpayoff:\Rn \to \Rge$ continuous and $\al$ locally Lipschitz class $\K$,
we define the CB-VF $\cbvf:\Rn \tms \Rge \to \Rge$ corresponding to $\vcbf$ and $\al$ implicitly by
\begin{equation}\label{eqn:cbvf}
    \bt_\al\lf(\cbvf(x,T),T\rg) = \sup_{\usig \in \usigs} \min_{t \in [0,T]} \bt_\al \lf( \terminalpayoff\lf(\traj(t)\rg), T-t \rg).
\end{equation}
Note that for each $x \in \Rn$ and $T \ge 0$, \eqref{eqn:cbvf} indeed has a unique solution $\cbvf(x,T)$ since 
$$0 \le \sup_{\usig \in \usigs} \min_{t \in [0,T]} \bt_\al \lf( \terminalpayoff\lf(\traj(t)\rg), T-t \rg) \le \bt_\al(\terminalpayoff(x),T).$$

It can be seen from \eqref{eqn:cbvf} that if there is a control signal $\usig$ for which $\terminalpayoff\lf(\traj(t)\rg) \ge \bt_\al(\terminalpayoff(x),t)$ for all $t \in [0,T]$, then $\cbvf(x,T) = \terminalpayoff(x)$.
More explicitly, the CB-VF $\cbvf(x,T)$ is equal to the initial immediate safety $\terminalpayoff(x)$ when there is a control signal $\usig$ which can prevent the immediate safety along the trajectory $\terminalpayoff\lf(\traj(t)\rg)$ from decaying too quickly, where $\al$ determines the allowed rate of decay (Fig. \ref{fig:cb-vfs}).
\begin{figure}
    \centering
    \includegraphics[width=\columnwidth]{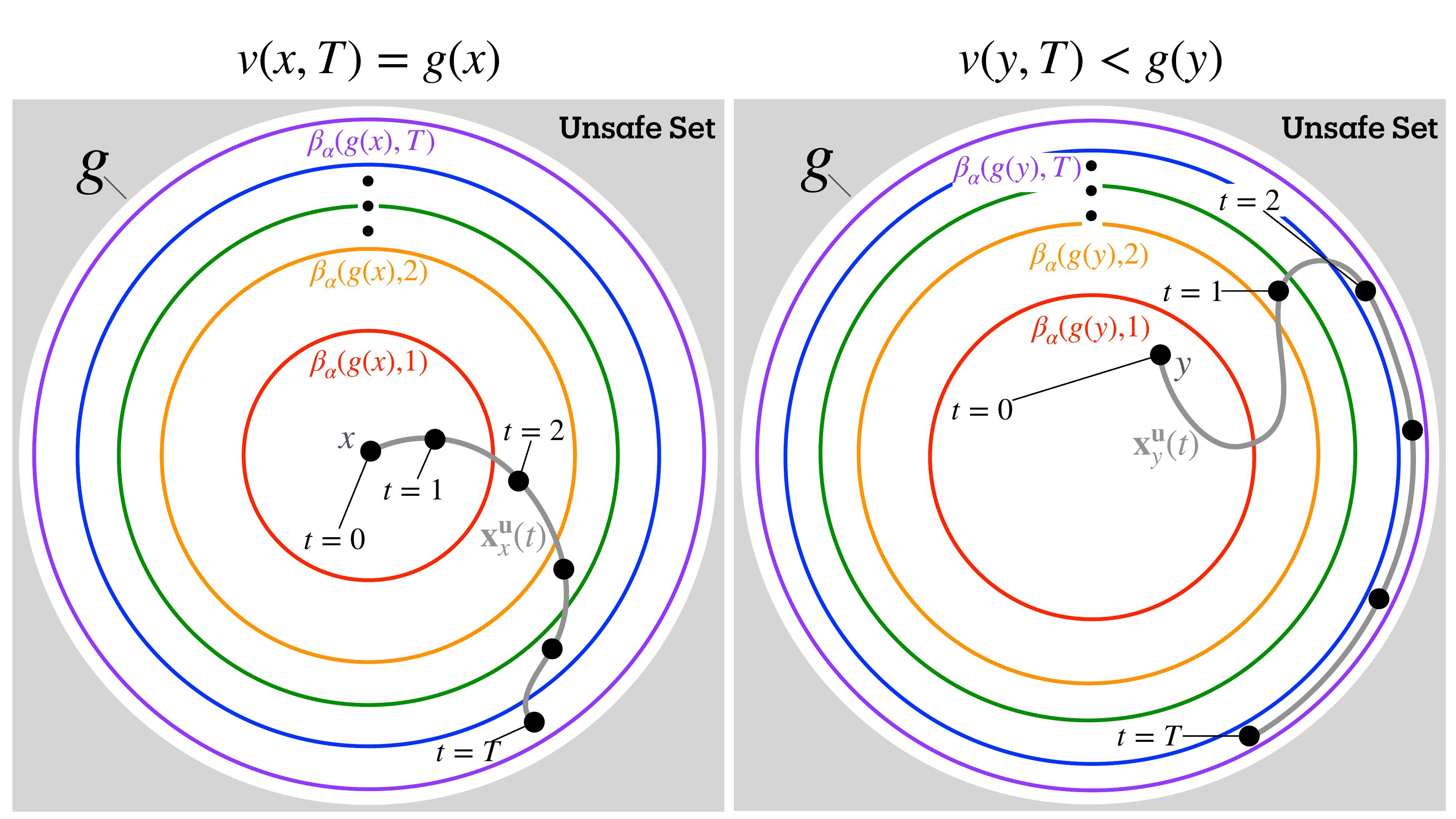}
    \caption{Cartoon showing the level sets of an example immediate safety function $\terminalpayoff:\R^2 \to \R$.
    Given a locally Lipschitz class $\K$ function $\alpha$, a time horizon $T$, and an initial state $x$ for which $\terminalpayoff(x) > 0$, the value $\cbvf(x,T)$ of the corresponding CB-VF is equal to the initial immediate safety $\terminalpayoff(x)$ when the control can keep the immediate safety value $\terminalpayoff\lf(\traj(t)\rg)$ along the trajectory above or arbitrarily close to $\bt_\al(\terminalpayoff(x),t)$ at all times $t \in [0,T]$.
    Otherwise, the CB-VF value is strictly less than the initial immediate safety.}
    \label{fig:cb-vfs}
\end{figure}

\section{Main Theoretical Results}

The first theorem characterizes a CB-VF as the viscosity solution of a certain HJ-PDE.
This is essentially a modification of Theorem 3 in \cite{Choi-Herbert-CDC-CB-VFs-2021} for nonlinear $\al$.

\begin{theorem}\label{theorem:cb-vf}
    Let $\terminalpayoff:\Rn \to \Rge$ be continuous and let $\al$ be locally Lipschitz class $\K$.
    Define the Hamiltonian $\hamiltonian: \Rn \tms \Rge \tms \Rn \to \R$ by \eqref{eqn:hamiltonian}.
    The CB-VF $\cbvf:\Rn \tms \Rge \to \Rge$ corresponding to $\terminalpayoff$ and $\al$ is continuous and is the unique viscosity solution of the HJ-PDE
    \begin{equation}\label{eqn:time-dependent-HJ-PDE}
        \max\lf\{\pp{T} \cbvf - \hamiltonian(x,\cbvf, \nabla \cbvf), \cbvf - \terminalpayoff(x) \rg\} = 0.
    \end{equation}
    for which the initial condition $\cbvf(\cdot,0) \equiv \vcbf(\cdot)$ holds.
\end{theorem}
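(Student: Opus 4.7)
The plan follows the standard HJR roadmap for value functions with minimum-over-time payoff: establish continuity of $\cbvf$, prove a dynamic programming principle (DPP), derive viscosity sub/super-solution inequalities from the DPP, and close with a comparison principle for uniqueness. The main novelty beyond \cite{Choi-Herbert-CDC-CB-VFs-2021} (which handles $\al(r) = \gm r$ via the substitution $W(x,T) := e^{\gm T} \cbvf(x,T)$) is accommodating nonlinear $\al$, for which this multiplicative shortcut is unavailable. Instead I would exploit the flow semigroup of $\bt_\al$: since $\bt_\al(\cdot, T)$ is the time-$T$ flow of $\dot y = -\al(y)$, its inverse $\psi_\al(\cdot, T) := \bt_\al(\cdot, T)^{-1}$ exists and is itself the time-$T$ flow of $\dot y = \al(y)$. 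Applying $\psi_\al(\cdot, T)$ to both sides of \eqref{eqn:cbvf} and using the identity $\psi_\al(\bt_\al(r, T-t), T) = \psi_\al(r, t)$ (obtained from the flow semigroup) yields the equivalent representation
\begin{equation*}
    \cbvf(x, T) = \sup_{\usig \in \usigs} \min_{t \in [0, T]} \psi_\al(g(\traj(t)), t),
\end{equation*}
i.e., a standard avoid value function with time-dependent immediate safety.

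Continuity of $\cbvf$ in $(x, T)$ then follows from this representation by a standard equi-continuity argument (cf. Chapter III.2 of \cite{Bardi-Dolcetta-Optimal-Control}) using Lipschitz $f$, compact $\uvals$, continuous $g$, and uniform continuity of $\psi_\al$ on compact sets. Splitting the outer min at $t = s$, applying the semigroup identity $\psi_\al(r, s + \tau) = \psi_\al(\psi_\al(r, \tau), s)$ to the $t \in [s, T]$ piece, and using $\ep$-optimal concatenation of controls over $[s, \infty)$ then yields the DPP
\begin{equation*}
    \cbvf(x, T) = \sup_{\usig \in \usigs} \min\Bigl\{\min_{t \in [0, s]} \psi_\al(g(\traj(t)), t),\ \psi_\al(\cbvf(\traj(s), T-s), s)\Bigr\}
\end{equation*}
for $0 \le s \le T$. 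The PDE \eqref{eqn:time-dependent-HJ-PDE} is then derived by the usual test-function argument: at a point $(x_0, T_0)$ where $\cbvf - \ph$ attains a local extremum, Taylor-expanding via $\partial_s \psi_\al(r, 0) = \al(r)$ and using the extremum property to substitute $\ph$ for $\cbvf$ gives
\begin{equation*}
    \psi_\al\bigl(\cbvf(\traj(s), T_0 - s), s\bigr) = \cbvf(x_0, T_0) + s\bigl[\nabla \ph(x_0, T_0) \cdot f(x_0, u) - \pp{T}\ph(x_0, T_0) + \al(\cbvf(x_0, T_0))\bigr] + o(s),
\end{equation*}
which combined with $\min_{t \in [0, s]} \psi_\al(g(\traj(t)), t) \to g(x_0)$ as $s \to 0$ delivers both sub- and super-solution inequalities with Hamiltonian \eqref{eqn:hamiltonian}; the obstacle term $\cbvf - g$ captures the case in which the outer minimum is attained at $t = 0$.

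Uniqueness then follows from a comparison principle for \eqref{eqn:time-dependent-HJ-PDE}. Local Lipschitzness of $\al$ makes $\hamiltonian$ locally Lipschitz in $r$, while compactness of $\uvals$ and Lipschitz $f$ make $\hamiltonian$ Lipschitz in $\lm$ on bounded sets; these are the standard hypotheses for comparison on obstacle/avoid HJ-PDEs (e.g., Chapter V of \cite{Bardi-Dolcetta-Optimal-Control}), with the non-compact state space handled by a standard localization argument using $0 \le \cbvf \le g$. The main technical obstacle is the DPP-to-PDE step: the nonlinear $\al$ enters the Hamiltonian only through the Taylor coefficient $\partial_s \psi_\al(r, 0) = \al(r)$, so the local expansion must produce $\al(\cbvf(x_0, T_0))$ at leading order rather than a composite such as $\al(\psi_\al(\cbvf, s))$ — precisely the point at which the exponential shortcut of \cite{Choi-Herbert-CDC-CB-VFs-2021} must be replaced by a more careful local analysis.
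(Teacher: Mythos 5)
Your route is genuinely different from the paper's. The paper never derives a dynamic programming principle for $\cbvf$ directly: it sets $w(x,T)=\bt_\al(\cbvf(x,T),T)$, observes that $w$ is a \emph{standard} (non-anti-discounted) avoid value function with the time-dependent payoff $\bt_\al(\terminalpayoff(\cdot),T-t)$, quotes the standard dynamic-programming and comparison results for $w$ (Lemma \ref{lemma:existence-and-uniqueness}), and then does all of its real work in a change-of-unknown lemma (Lemma \ref{lemma:v-w-equivalence}) showing that $\cbvf$ is a viscosity solution of \eqref{eqn:time-dependent-HJ-PDE} iff $w$ is a viscosity solution of the transformed PDE \eqref{eqn:transformed-time-dependent-HJ-PDE}; the delicate part there is the degenerate set $\{\cbvf=0\}$, where the change of variables degenerates and a separate argument via Lemma \ref{lemma:beta-comparison-principle} is needed. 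You instead push the inverse flow $\kp_\al$ inside the sup--min to obtain a direct representation of $\cbvf$ and re-derive the DPP and the test-function inequalities from scratch. Both routes hinge on the same semigroup structure of $\bt_\al$; yours buys a self-contained derivation (and, incidentally, handles the $\cbvf=0$ case more gracefully, since your Taylor expansion of $\kp_\al(r,s)=r+s\al(r)+o(s)$ only needs continuity of $\al$ at $0$), while the paper's buys the ability to cite off-the-shelf existence and uniqueness theory for a Hamiltonian with no $r$-dependence.

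Two points in your plan are under-addressed. First, $\kp_\al(r,\cdot)$ is the forward flow of $\dot y=\al(y)$ and in general exists only on a finite maximal interval $[0,b_\al(r))$ (take $\al(r)=r^2$), so the quantities $\kp_\al(\terminalpayoff(\traj(t)),t)$ and $\kp_\al(\cbvf(\traj(s),T-s),s)$ in your representation formula and DPP are simply undefined for large $t$ or $s$. The formulas are salvageable by assigning the value $+\ii$ off the domain (the minimum stays finite because $t=0$ contributes $\terminalpayoff(x)$), but then your continuity argument via ``uniform continuity of $\kp_\al$ on compact sets'' breaks down as stated, since the relevant set of arguments need not be a compact subset of the domain of $\kp_\al$; the paper's formulation avoids this entirely because it only ever evaluates $\kp_\al$ at points of the form $(\bt_\al(r,T),T)$, which lie in the domain automatically. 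Second, uniqueness: the Hamiltonian $\hamiltonian$ in \eqref{eqn:time-dependent-HJ-PDE} is \emph{increasing} in $r$ (anti-discounting), so the equation fails the usual properness hypothesis for comparison, and $\al$ is only locally Lipschitz while $\cbvf$ is bounded only by the possibly unbounded $\terminalpayoff$. A finite-horizon Gronwall-type comparison can absorb a Lipschitz-in-$r$ term of the wrong sign after localization, but this is precisely the step the paper's transformation is designed to sidestep: comparison is applied to the $w$-equation, whose Hamiltonian $\vanillahamiltonian$ has no $r$-dependence, and uniqueness for $\cbvf$ is inherited through the bijection $r\mapsto\bt_\al(r,T)$. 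Calling this step ``standard hypotheses'' skips the main obstruction that motivates the paper's detour through $w$.
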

\begin{proof}
    See Appendix \ref{subsection:appendix-cb-vf-proof}.
\end{proof}
\begin{remark}
    It is standard in HJR to let $t \le 0$ be the initial time, let $0$ be the final time, and derive an analogue of \eqref{eqn:time-dependent-HJ-PDE} via dynamic programming in $t$.
    In this work, it is more clear and convenient to let $0$ be the initial time, let $T \ge 0$ be the final time, and do dynamic programming in $T$.
    It is for this reason that we consider an initial-value problem rather than a terminal-value problem.
    We refer the reader to Chapters 10.1 and 10.3 of \cite{Evans-PDEs} for more on interchanging between such problems.
\end{remark}

The following result establishes an equivalence between non-negative viscosity CBFs, time-invariant CB-VFs, and non-negative functions which provide the Barrier Guarantee.
In particular, all these functions are viscosity solutions of a certain time-independent HJ-PDE.
\begin{theorem}\label{theorem:main-theorem}
    Let $\al$ be locally Lipschitz class $\K$, and let $\vcbf:\R^n \to \Rge$ be continuous.
    The following are equivalent:
    \begin{enumerate}
        \item $\vcbf$ is a viscosity CBF with respect to $\al$.
        \item $\vcbf$ is a viscosity solution of the HJ-PDE
        \begin{equation}\label{eqn:time-independent-HJ-PDE}
            -\min\{\hamiltonian(x,\vcbf,\nabla \vcbf),0\} = 0
        \end{equation}
        where the Hamiltonian $\hamiltonian:\Rn \tms \Rge \tms \Rn \to \R$ is given by \eqref{eqn:hamiltonian}.
        \item The CB-VF $\cbvf$ corresponding to $\vcbf$ and $\al$ is time-invariant, i.e. $\cbvf(\cdot,T) \equiv \vcbf(\cdot)$ for each $T \ge 0$.
        \item $\vcbf$ provides the Barrier Guarantee with respect to $\al$. 
    \end{enumerate}
\end{theorem}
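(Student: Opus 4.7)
I will establish the chain $(1) \Leftrightarrow (2) \Leftrightarrow (3)$ by PDE arguments and then $(3) \Leftrightarrow (4)$ by control-theoretic arguments.

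The equivalence $(1) \Leftrightarrow (2)$ is essentially immediate from Lemma \ref{lemma:viscosity-cbf-is-hj-pde-sub-solution}: being a viscosity solution of \eqref{eqn:time-independent-HJ-PDE} is the same as being a viscosity sub-solution of $-\hamiltonian = 0$, because the viscosity super-solution inequality for $\max\{-\hamiltonian, 0\} = 0$ is automatic (its left-hand side is non-negative regardless of the test function). For $(2) \Leftrightarrow (3)$, I will appeal to the uniqueness asserted in Theorem \ref{theorem:cb-vf} with $\terminalpayoff = \vcbf$. The idea is that the constant-in-$T$ ansatz $\Ti\cbvf(x, T) := \vcbf(x)$ is a viscosity solution of \eqref{eqn:time-dependent-HJ-PDE} with initial condition $\vcbf$ if and only if $\vcbf$ satisfies (2). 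Routine test-function manipulations give both directions: to transfer a test function $\ph(x, T)$ for the time-dependent PDE to one for the time-independent PDE at a point $(x_0, T_0)$ with $T_0 > 0$, use constancy of $\Ti\cbvf$ in $T$ to deduce $\pp{T}\ph(x_0, T_0) = 0$ and then use $x \mapsto \ph(x, T_0)$; in the reverse direction, use the time-independent lift $\Ph(x, T) := \ph(x)$. Uniqueness in Theorem \ref{theorem:cb-vf} then identifies $\Ti\cbvf$ with the CB-VF, giving (3).

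For $(4) \Rightarrow (3)$, fix $x$ with $\vcbf(x) > 0$ and $\te \in [0, 1)$, and let $\usig^\te$ be the control from the Barrier Guarantee. The semi-group identity $\bt_\al(\bt_\al(r, s), t) = \bt_\al(r, s+t)$ combined with monotonicity of $\bt_\al$ in its first argument yields, for each $T$ and each $t \in [0, T]$, the pointwise bound $\bt_\al(\vcbf(\xsig_x^{\usig^\te}(t)), T-t) \ge \bt_\al(\te \vcbf(x), T)$. Minimizing over $t$, taking the sup over $\usig$, and letting $\te \to 1^-$ (using continuity of $\bt_\al(\cdot, T)$) gives $\cbvf(x, T) \ge \vcbf(x)$; the reverse inequality is trivial (evaluate the min in \eqref{eqn:cbvf} at $t = 0$). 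The case $\vcbf(x) = 0$ is handled by non-negativity of the CB-VF.

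For $(3) \Rightarrow (4)$, time-invariance of $\cbvf$ combined with the sup in \eqref{eqn:cbvf} produces, for each horizon $T$, a control $\usig_T$ achieving the sup up to the slack $\epsilon_T := \bt_\al(\vcbf(x), T) - \bt_\al(\te \vcbf(x), T) > 0$ (positivity thanks to $\te < 1$). Running the same semi-group and monotonicity calculation in reverse recovers the Barrier Guarantee on $[0, T]$. To synthesize a single control $\usig^* \in \usigs$ that works on all of $[0, \ii)$, I will take $T_n \to \ii$ and extract, via Arzela--Ascoli (using Lipschitz $f$ and compact $\uvals$ for uniform bounds on each finite interval), a subsequence of trajectories $\xsig_x^{\usig_{T_n}}$ converging uniformly on compacts, then invoke a Filippov-type closure theorem to realize the limit as the trajectory of some measurable control $\usig^*$. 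Continuity of $\vcbf$ then permits passing to the limit in the pointwise inequalities to deliver (4). The main obstacle is precisely this closure step: realizing a uniform limit of admissible trajectories as the trajectory of a measurable control is the classical, non-trivial point that typically requires either convexity of $\{f(x, u) : u \in \uvals\}$ for each $x$ or a passage through relaxed (Young-measure) controls.
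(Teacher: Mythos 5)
Your steps $(1)\Leftrightarrow(2)$, $(2)\Leftrightarrow(3)$, and $(4)\Rightarrow(3)$ match the paper's argument in both structure and substance: the super-solution inequality for \eqref{eqn:time-independent-HJ-PDE} is indeed automatic, the constant-in-$T$ lift plus the uniqueness in Theorem \ref{theorem:cb-vf} handles $(2)\Leftrightarrow(3)$, and the semi-group/monotonicity computation followed by $\te \to 1^-$ is exactly how the paper proves $(4)\Rightarrow(3)$.

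The genuine gap is in $(3)\Rightarrow(4)$, and it is the one you flag yourself. The Arzel\`a--Ascoli plus Filippov-closure route does not go through in this setting: the paper assumes only that $\uvals$ is compact and $f$ is Lipschitz, so the velocity sets $\{f(x,u) : u \in \uvals\}$ need not be convex (the paper's own running examples use $\uvals = \{-1,+1\}$, and Appendix \ref{subsection:appendix-counter-example} is built precisely on the failure of such convexity). A uniform limit of admissible trajectories therefore need not be an admissible trajectory, and passing to relaxed controls would prove the Barrier Guarantee only for the convexified system, not for \eqref{eqn:dynamics}. The paper avoids any compactness or closure argument entirely: it constructs $\usig^*$ by concatenating controls over the unit intervals $[n,n+1]$. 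Starting from $x_0 = x$ and $\te_0 = \te$, at each step it uses time-invariance over the horizon $T = 1$ to pick $\usig_{n+1} \in \usigs$ and a new multiplier $\te_{n+1} \in [0,1)$ with $\bt_\al(\te_n \vcbf(x_n),1) \le \min_{t\in[0,1]} \bt_\al(\te_{n+1}\vcbf(\xsig_{x_n}^{\usig_{n+1}}(t)),1-t)$; such a choice exists because the supremum over controls equals $\bt_\al(\vcbf(x_n),1) > \bt_\al(\te_n\vcbf(x_n),1)$, so the per-interval suboptimality of a near-optimizer can be absorbed into the gap between $\te_{n+1}$ and $1$. Gluing the $\usig_{n+1}$ gives a single measurable control on $[0,\ii)$, and an induction with the semi-group identity for $\bt_\al$ yields $\vcbf(\xsig_x^{\usig^*}(t)) \ge \bt_\al(\te\vcbf(x),t)$ for all $t \ge 0$. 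This is exactly the technical reason the multiplier $\te$ appears in Definition \ref{def:barrier-guarantee}; without exploiting that slack, the step you identify as the obstacle cannot be closed under the paper's hypotheses.
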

\begin{proof}
    See Appendix \ref{subsection:appendix-main-theorem-proof}.
\end{proof}
Let us make explicit the meaning of the equivalence between 1 and 3.
Recall that the CB-VF $\cbvf$ is always defined with respect to a chosen immediate safety function (typically called $\terminalpayoff$, but in the above theorem it is $\vcbf$).
The implication $1 \to 3$ indicates that if we choose the immediate safety function $h$ to be a viscosity CBF, the corresponding CB-VF will be time-invariant, i.e. always equal to $h$.
The implication $3 \to 1$ indicates that if we happen to choose an immediate safety function $\vcbf$ for which the corresponding CB-VF is time-invariant, then $\vcbf$ must be a viscosity CBF.

For completeness, we provide the following corollary, which does not assume that $\vcbf$ is non-negative.
\begin{corollary}
    Let $\al$ be locally Lipschitz class $\K$, and let $\vcbf:\Rn \to \R$ be continuous.
    Then $\vcbf$ is a viscosity CBF w.r.t. $\al$ iff it provides the Barrier Guarantee w.r.t. $\al$.
\end{corollary}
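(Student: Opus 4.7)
The plan is to leverage Theorem 2 together with the two "non-negative" reduction lemmas (Lemma \ref{lemma:non-negative-barrier-guarantee} and Lemma \ref{lemma:non-negative-viscosity-cbf}) to transfer the equivalence from the non-negative setting to the general continuous setting. The intuition is that both the Barrier Guarantee and the viscosity CBF property are conditions that only activate on the strict zero super-level set $\{\vcbf > 0\}$, so clipping $\vcbf$ from below by $0$ should not change either property, and on the clipped (non-negative) function Theorem 2 applies directly.

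Concretely, I would proceed as a short chain of \emph{iff}'s. First, by Lemma \ref{lemma:non-negative-viscosity-cbf}, $\vcbf$ is a viscosity CBF w.r.t. $\al$ iff $\max\{0,\vcbf\}$ is. Second, $\max\{0,\vcbf\}$ is continuous and non-negative, so the hypotheses of Theorem \ref{theorem:main-theorem} are met, and the equivalence $(1) \Leftrightarrow (4)$ in that theorem gives that $\max\{0,\vcbf\}$ is a viscosity CBF w.r.t. $\al$ iff $\max\{0,\vcbf\}$ provides the Barrier Guarantee w.r.t. $\al$. Third, by Lemma \ref{lemma:non-negative-barrier-guarantee}, $\max\{0,\vcbf\}$ provides the Barrier Guarantee w.r.t. $\al$ iff $\vcbf$ itself does. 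Composing these three equivalences yields the corollary.

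There is essentially no obstacle here, since the real technical content is encapsulated inside Theorem \ref{theorem:main-theorem} and the two clipping lemmas. The only thing one might want to double-check in writing it out is that the hypotheses of Theorem \ref{theorem:main-theorem} (namely that the candidate function be continuous and non-negative) are indeed satisfied by $\max\{0,\vcbf\}$, which is immediate from continuity of $\vcbf$ and the continuity of $r \mapsto \max\{0,r\}$. Thus the proof should consist of just a few lines invoking the two lemmas around Theorem \ref{theorem:main-theorem} as a sandwich.
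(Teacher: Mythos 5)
Your proposal is correct and is exactly the paper's argument: the paper's proof reads ``Immediate from Theorem \ref{theorem:main-theorem} and Lemmas \ref{lemma:non-negative-barrier-guarantee} and \ref{lemma:non-negative-viscosity-cbf},'' which is precisely the sandwich of the two clipping lemmas around the $(1)\Leftrightarrow(4)$ equivalence that you describe. No gaps.
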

\begin{proof}
    Immediate from Theorem \ref{theorem:main-theorem} and Lemmas \ref{lemma:non-negative-barrier-guarantee} and \ref{lemma:non-negative-viscosity-cbf}.
\end{proof}
Note that in the above corollary, no mention of value functions were made, but arriving at this conclusion required us to leverage the relationship between CBFs and HJR.
In this sense, the above result is an example of the utility of connecting these two theoretical frameworks.

\begin{corollary}
    Let $\vcbf:\Rn \to \Rge$ be continuous.
    The avoid value function $V:\Rn \tms \Rge \to \Rge$ corresponding to $\vcbf$ is time-invariant iff $\vcbf$ is a viscosity CBF with respect to all locally Lipschitz class $\K$ functions $\al$.
\end{corollary}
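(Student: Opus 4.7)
The plan is to apply Theorem \ref{theorem:main-theorem} to replace the viscosity CBF hypothesis by an equivalent statement about the CB-VF. Specifically, by the equivalence of items 1 and 3 there, the corollary reduces to showing that $V$ is time-invariant if and only if the CB-VF $\cbvf_\al$ corresponding to $\vcbf$ and $\al$ is time-invariant for every locally Lipschitz class $\K$ function $\al$. The core of both directions rests on the elementary monotonicity of $\bt_\al$: it is class $\K$ (hence strictly increasing and continuous) in its first argument and non-increasing in its second with $\bt_\al(r,0) = r$, so in particular $\bt_\al(r,T-t) \in [\bt_\al(r,T), r]$ for all $t \in [0,T]$.

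For the forward direction, assume $V(\cdot,T) \equiv \vcbf$. Fix $\al$, $x$, $T \ge 0$, and $\epsilon > 0$. By time-invariance of $V$, select $\usig$ with $\vcbf(\traj(t)) \ge \vcbf(x) - \epsilon$ for all $t \in [0,T]$. Combining the two monotonicity facts above yields
\begin{equation*}
    \min_{t \in [0,T]} \bt_\al \lf( \vcbf(\traj(t)), T-t \rg) \ge \bt_\al \lf( \vcbf(x) - \epsilon, T \rg).
\end{equation*}
Taking the supremum over $\usig$, sending $\epsilon \to 0^+$, and invoking continuity of $\bt_\al(\cdot,T)$ shows $\cbvf_\al(x,T) \ge \vcbf(x)$; the reverse inequality is the trivial upper bound obtained by evaluating the inner min at $t = 0$, so $\cbvf_\al$ is time-invariant.

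For the backward direction, the key idea is to recover $V$ as a limit of $\cbvf_\al$ under vanishing anti-discounting. Specialize to the linear family $\al_c(r) := cr$, for which $\bt_{\al_c}(r,T) = r e^{-cT}$. Time-invariance of $\cbvf_{\al_c}$ supplies, for each $\epsilon > 0$, a signal $\usig$ with $\min_t \vcbf(\traj(t)) \, e^{-c(T-t)} \ge \vcbf(x) e^{-cT} - \epsilon$. Dropping the factor $e^{-c(T-t)} \le 1$ on the left yields $V(x,T) \ge \vcbf(x) e^{-cT} - \epsilon$ pointwise; sending $\epsilon \to 0$ then $c \to 0^+$ gives $V(x,T) \ge \vcbf(x)$, while $V(x,T) \le \vcbf(x)$ holds trivially by evaluating at $t = 0$.

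The main subtlety is that in the backward direction the near-optimal control signal supplied by CB-VF time-invariance depends on $c$, so one cannot take limits of the controls directly; instead, the argument pushes the limits through after the supremum over $\usig$ has already been discharged, which is what makes the elementary bound $\bt_{\al_c}(r,s) \le r$ do all the work.
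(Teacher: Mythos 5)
Your proof is correct, but it routes through a different leg of Theorem \ref{theorem:main-theorem} than the paper does. The paper invokes the equivalence ($1 \leftrightarrow 4$) and reduces the corollary to showing that time-invariance of the avoid value function $V$ is equivalent to $\vcbf$ providing the Barrier Guarantee for every locally Lipschitz class $\K$ function $\al$; it then asserts this can be ``checked similarly to steps ($3\to4$) and ($4\to3$),'' which in particular requires redoing the recursive concatenation-of-controls construction at the level of individual trajectories. You instead invoke ($1 \leftrightarrow 3$) and reduce to the purely value-function-level statement that $V$ is time-invariant iff $\cbvf_\al$ is time-invariant for all such $\al$, which you then prove by direct monotonicity of $\bt_\al$ (forward direction) and by sending the anti-discount rate of the linear family $\al_c(r)=cr$ to zero after the supremum over $\usig$ has been taken (backward direction). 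This is a genuinely more elementary bridge: it never needs to produce a single control signal valid on all of $[0,\ii)$, so it sidesteps the concatenation and induction entirely, and it correctly observes that the backward implication only needs the linear subfamily. The trade-off is that the paper's route, once the adaptation is carried out, directly exhibits the Barrier Guarantee interpretation of $V$'s time-invariance, whereas yours keeps everything inside the value-function formalism. Two cosmetic points: in the forward direction you should replace $\vcbf(x)-\epsilon$ by $\max\{0,\vcbf(x)-\epsilon\}$ (or take $\epsilon<\vcbf(x)$, the case $\vcbf(x)=0$ being trivial) so that $\bt_\al$ is evaluated on its domain $\Rge\tms\Rge$; and the step $\cbvf_\al(x,T)\ge\vcbf(x)$ from $\bt_\al(\cbvf_\al(x,T),T)\ge\bt_\al(\vcbf(x),T)$ uses strict monotonicity of $\bt_\al(\cdot,T)$, which holds by uniqueness of solutions of $\dot y=-\al(y)$ since $\al$ is locally Lipschitz --- worth saying explicitly.
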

\begin{proof}
    It can be checked similarly to steps ($3 \to 4$) and ($4 \to 3$) in Appendix \ref{subsection:appendix-main-theorem-proof} that $V(\cdot,T) \equiv \vcbf(\cdot)$ for all $T \ge 0$ iff $V$ provides the Barrier Guarantee with respect to all locally Lipschitz class $\K$ functions $\al$.
    Theorem \ref{theorem:main-theorem} then gives the desired result.
\end{proof}

\section{Synthesis properties of viscosity CBFs}

Now that we have established the theoretical utility of viscosity CBFs in Theorem \ref{theorem:main-theorem}, let us investigate some of their properties.
Lemma \ref{lemma:viscosity-cbf-is-hj-pde-sub-solution} established an equivalence between viscosity CBFs and viscosity sub-solutions of a certain HJ-PDE.
This equivalence allows us to apply the remarkable array of theory developed for viscosity solutions of HJ-PDEs directly to viscosity CBFs.
As an example of how one can leverage this body of work, we briefly demonstrate ways to synthesize new viscosity CBFs from known viscosity CBFs (or traditional CBFs) via this approach.

\begin{theorem}
    Let $\vcbf_1, \vcbf_2: \Rn \to \R$ be viscosity CBFs with respect to a class $\K$ function $\al$.
    Then $\max\{\vcbf_1, \vcbf_2\}$ is also a viscosity CBF with respect to $\al$.
\end{theorem}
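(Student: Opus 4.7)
The plan is to verify Definition \ref{def:viscosity-cbfs} directly for $\vcbf := \max\{\vcbf_1,\vcbf_2\}$, using the standard viscosity sub-solution trick that the pointwise maximum of two sub-solutions of the same PDE is again a sub-solution. Continuity of $\vcbf$ is immediate from continuity of $\vcbf_1$ and $\vcbf_2$, so the only work is at points where the CBF inequality is required.

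Fix $x_0 \in \Rn$ with $\vcbf(x_0) > 0$, and let $\ph:\Rn \to \R$ be continuously differentiable such that $\vcbf - \ph$ attains a local maximum at $x_0$. We must show that
$$\max_{u \in \uvals} \nabla \ph(x_0) \cdot f(x_0,u) \ge -\al(\vcbf(x_0)).$$
Without loss of generality, assume $\vcbf_1(x_0) \ge \vcbf_2(x_0)$, so that $\vcbf_1(x_0) = \vcbf(x_0) > 0$. The key observation is that since $\vcbf_1 \le \vcbf$ pointwise and the two agree at $x_0$, in some neighborhood of $x_0$ we have
$$\vcbf_1(x) - \ph(x) \,\le\, \vcbf(x) - \ph(x) \,\le\, \vcbf(x_0) - \ph(x_0) \,=\, \vcbf_1(x_0) - \ph(x_0),$$
so that $\vcbf_1 - \ph$ also attains a local maximum at $x_0$.

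Since $\vcbf_1$ is a viscosity CBF with respect to $\al$ and $\vcbf_1(x_0) > 0$, applying Definition \ref{def:viscosity-cbfs} with the test function $\ph$ at $x_0$ yields
$$\max_{u \in \uvals} \nabla \ph(x_0) \cdot f(x_0,u) \ge -\al(\vcbf_1(x_0)) = -\al(\vcbf(x_0)),$$
which is exactly what is needed.

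There is essentially no obstacle once the sub-solution trick is in hand; the only subtlety is the WLOG reduction, which is legitimate because the inequality to be verified depends only on the value $\vcbf(x_0)$ and not on which branch of the maximum attains it. An alternative route of comparable length would be to replace each $\vcbf_i$ by $\max\{0,\vcbf_i\}$ via Lemma \ref{lemma:non-negative-viscosity-cbf}, invoke Lemma \ref{lemma:viscosity-cbf-is-hj-pde-sub-solution} to cast the statement as closure of viscosity sub-solutions of $-\hamiltonian(x,\vcbf,\nabla \vcbf)=0$ under pointwise maxima, and appeal to the general PDE fact; the argument above is precisely that general fact specialized to the present setting.
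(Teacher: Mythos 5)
Your proof is correct and is essentially the paper's argument: the paper simply cites Lemma \ref{lemma:viscosity-cbf-is-hj-pde-sub-solution} together with the general fact that the pointwise maximum of two viscosity sub-solutions is again a sub-solution (Proposition II.2.1(a) of the cited reference), and your direct verification is exactly the standard proof of that fact specialized to Definition \ref{def:viscosity-cbfs}. If anything, your version is slightly more self-contained, since it works straight from the definition and sidesteps the implicit reduction to non-negative functions that the paper's route through Lemma \ref{lemma:viscosity-cbf-is-hj-pde-sub-solution} requires.
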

\begin{proof}
    This result follows from Lemma \ref{lemma:viscosity-cbf-is-hj-pde-sub-solution} and Proposition II.2.1(a) in \cite{Bardi-Dolcetta-Optimal-Control}.
\end{proof}

\begin{theorem}
    Let $\vcbf_1, \vcbf_2, \dots: \Rn \to \R$ be a sequence of viscosity CBFs with respect to a class $\K$ function $\al$.
    Suppose $\vcbf_i \to \vcbf$ locally uniformly, where $\vcbf:\Rn \to \Rge$.
    Then $\vcbf$ is also a viscosity CBF with respect to $\al$.
\end{theorem}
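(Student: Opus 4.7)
The plan is to reduce to a statement about viscosity sub-solutions and invoke the standard stability result for viscosity sub-solutions under locally uniform convergence. Since Lemma \ref{lemma:viscosity-cbf-is-hj-pde-sub-solution} provides the bridge between viscosity CBFs and sub-solutions of the time-independent HJ-PDE $-\hamiltonian(x,\vcbf,\nabla \vcbf) = 0$, but only for non-negative functions, the first move is to reduce to the non-negative case.

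First, I would replace each $\vcbf_i$ with $\Ti{\vcbf}_i := \max\{0, \vcbf_i\}$. By Lemma \ref{lemma:non-negative-viscosity-cbf}, each $\Ti{\vcbf}_i$ is still a viscosity CBF with respect to $\al$. Since $\vcbf \ge 0$, the limit $\max\{0,\vcbf\}$ equals $\vcbf$, and the locally uniform convergence $\vcbf_i \to \vcbf$ passes to $\Ti{\vcbf}_i \to \vcbf$ locally uniformly (the $\max\{0,\cdot\}$ operation is $1$-Lipschitz, so uniform convergence on compact sets is preserved). In particular $\vcbf$ is continuous as a locally uniform limit of continuous functions, so it is a legitimate candidate for being a viscosity CBF.

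Next, by Lemma \ref{lemma:viscosity-cbf-is-hj-pde-sub-solution}, each $\Ti{\vcbf}_i$ is a viscosity sub-solution on $\Rn$ of the equation
\begin{equation*}
    -\hamiltonian(x,w,\nabla w) = 0,
\end{equation*}
where the Hamiltonian $\hamiltonian(x,r,\lm) = \max_{u \in \uvals}\lm \cdot f(x,u) + \al(r)$ is jointly continuous in $(x,r,\lm)$ (continuity in $\lm$ and in $x$ follows from the Lipschitz continuity of $f$ and compactness of $\uvals$ via a standard uniform-continuity argument over the compact control set, while continuity in $r$ is immediate from the continuity of $\al$). The well-known stability theorem for viscosity sub-solutions (see e.g. Lemma II.1.9 or Proposition II.2.2 in \cite{Bardi-Dolcetta-Optimal-Control}) asserts that a locally uniform limit of viscosity sub-solutions of a PDE with continuous Hamiltonian is itself a viscosity sub-solution. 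Applying this to the sequence $\Ti{\vcbf}_i$ yields that $\vcbf$ is a viscosity sub-solution of $-\hamiltonian(x,\vcbf,\nabla \vcbf) = 0$. Another application of Lemma \ref{lemma:viscosity-cbf-is-hj-pde-sub-solution} (the reverse direction) then gives that $\vcbf$ is a viscosity CBF with respect to $\al$.

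The step that requires the most care is confirming the hypotheses of the stability theorem; the main obstacle, such as it is, lies in checking joint continuity of $\hamiltonian$ and in ensuring the non-negativity reduction is legitimate (since Lemma \ref{lemma:viscosity-cbf-is-hj-pde-sub-solution} is stated for non-negative functions). Once these technicalities are dispatched, the result is essentially a direct transcription of the classical stability property of viscosity sub-solutions into the language of viscosity CBFs.
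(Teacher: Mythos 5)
Your proof is correct and follows essentially the same route as the paper's: continuity of the limit via the Uniform Limit Theorem, then Lemma \ref{lemma:viscosity-cbf-is-hj-pde-sub-solution} combined with the stability of viscosity sub-solutions under locally uniform convergence (Proposition II.2.2 in \cite{Bardi-Dolcetta-Optimal-Control}). You are in fact slightly more careful than the paper's one-line argument, since you explicitly carry out the reduction to non-negative functions via $\max\{0,\cdot\}$ and Lemma \ref{lemma:non-negative-viscosity-cbf}, which is needed before Lemma \ref{lemma:viscosity-cbf-is-hj-pde-sub-solution} applies.
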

\begin{proof}
    Note that $\vcbf$ is automatically continuous by the Uniform Limit Theorem.
    The result then follows from Lemma \ref{lemma:viscosity-cbf-is-hj-pde-sub-solution} and Proposition II.2.2 in \cite{Bardi-Dolcetta-Optimal-Control}.
\end{proof}
\begin{remark}
    The above result also holds if the dynamics themselves change, i.e. $\vcbf_i$ is a viscosity CBF for the dynamics $\dot{\xsig} = f_i(\xsig,\usig)$, so long as the $f_i$ also converge locally uniformly to $f$.
    See Proposition II.2.2 in \cite{Bardi-Dolcetta-Optimal-Control} for details.
\end{remark}

\begin{remark}
    The theorems in this section assume we have already been given viscosity (or traditional) CBFs when synthesizing new ones.
    One may wonder if there is a computational technique to synthesize a viscosity CBF from scratch.
    Indeed, \cite{Choi-Herbert-CDC-CB-VFs-2021} reasons that if a CB-VF $\cbvf$ corresponding to some immediate safety function $\terminalpayoff$ converges as $T \to \ii$, i.e. $\cbvf(\cdot,T) \to \vcbf(\cdot)$, and the limit $\vcbf$ happens to be continuously differentiable, then one can expect $\vcbf$ to be a valid CBF for the largest control-invariant subset within the zero super-level set of $\terminalpayoff$.
    This work suggests that even when not differentiable, one can expect $\vcbf$ will be a viscosity CBF for this subset.
    Future work is required to make this approach precise.
\end{remark}

\section{Conclusion}
We previously noted that a viscosity solution of a partial differential equation is useful because it does not require continuous differentiability (can be too restrictive for existence) but requires more than differentiability almost everywhere (can be too loose for uniqueness).
In the same vein, a viscosity CBF is useful because it does not require continuous differentiability, which can be too restrictive for existence (e.g. no CBF can certify control invariance of the open unit square), but requires more than the CBF inequality holding almost everywhere (which is too loose for safety guarantees).
While this symmetry may at first seem like a coincidence, it is instead a consequence of the fact that viscosity CBFs are indeed HJR value functions (namely time-invariant CB-VFs).

This realization opens up a new approach to CBF theory by characterizing a CBF (in this more generalized sense) as the viscosity sub-solution of an invariant HJ-PDE, which immediately allows us to use the theory developed for these PDEs to obtain theoretical guarantees on viscosity CBFs.
With this connection in mind, the authors expect that advancements in the theory of HJR will be useful for extending the theory of CBFs and vice-versa.

\appendix

\subsection{Proof of Theorem \ref{theorem:cb-vf}}\label{subsection:appendix-cb-vf-proof}
Throughout this part of the appendix, we fix an arbitrary locally Lipschitz class $\K$ function $\al:\Rge \to \Rge$ and an arbitrary continuous function $\terminalpayoff:\Rn \to \Rge$.
We define $\hamiltonian:\Rn \tms \Rge \tms \Rn \to \R$ by \eqref{eqn:hamiltonian} and define $\vanillahamiltonian:\Rn \tms \Rn \to \R$ by $\vanillahamiltonian(x,\lm) = \max_{u \in \uvals} \lm \cdot f(x,u)$.

Additionally, for each $r \ge 0$, we let $[0,b_\al(r))$ be the maximal interval of existence of the initial value problem
\begin{equation}\label{proof:inverse-dynamics}
    \dot{y} = \al(y), \quad y(0) = r,
\end{equation}
and set $A_\al = \{(r,t) \in \Rge \tms \Rge \mid t \in [0,b_\al(r))\}$.
We define $\kp_\al:A_\al \to \Rge$ by for each $r \ge 0$ setting $\kp_\al(r,\cdot):[0,b_\al(r)) \to \Rge$ to be the solution to \eqref{proof:inverse-dynamics}.

\begin{lemma}\label{lemma:derivatives-of-beta}
    For each $r,t > 0$, we have $\pp{t} \bt_\al(r,t) = -\al(\bt_\al(r,t))$ and $\pp{r} \bt_\al(r,t) = \al(\bt_\al(r,t)) / \al(r)$.
\end{lemma}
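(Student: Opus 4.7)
The plan is to prove the two derivative formulas separately. The first formula $\partial_t \bt_\al(r,t) = -\al(\bt_\al(r,t))$ is immediate from the very definition of $\bt_\al$: for fixed $r > 0$, the function $y(t) := \bt_\al(r,t)$ is by construction the solution of the initial value problem $\dot y = -\al(y)$, $y(0) = r$, so its time-derivative is $-\al(y(t))$ by definition.

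For the second formula, I would use a separation-of-variables argument rather than a sensitivity equation (the sensitivity equation would require $\al$ to be differentiable, but we only have local Lipschitzness). Since $\al$ is class $\K$, we have $\al(s) > 0$ for all $s > 0$, and since $\al$ is continuous, the function
\begin{equation*}
    G(y) := \int_1^y \frac{ds}{\al(s)}, \qquad y > 0,
\end{equation*}
is well-defined, continuously differentiable on $(0,\ii)$, and strictly increasing (with $G'(y) = 1/\al(y)$), hence has a continuously differentiable inverse on its image. Crucially, for $r > 0$ the solution $\bt_\al(r,\cdot)$ stays strictly positive for all $t \ge 0$: if it ever hit $0$ it would by uniqueness coincide with the constant-zero solution of $\dot y = -\al(y)$, contradicting $\bt_\al(r,0) = r > 0$.

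With this in hand, for $r,t > 0$ one computes via the chain rule that
\begin{equation*}
    \frac{d}{dt} G\lf(\bt_\al(r,t)\rg) = \frac{-\al(\bt_\al(r,t))}{\al(\bt_\al(r,t))} = -1,
\end{equation*}
so integrating from $0$ to $t$ gives $G(\bt_\al(r,t)) = G(r) - t$, i.e.\ $\bt_\al(r,t) = G^{-1}(G(r) - t)$. Differentiating this closed-form expression with respect to $r$ and using $(G^{-1})'(z) = \al(G^{-1}(z))$ yields $\partial_r \bt_\al(r,t) = \al(\bt_\al(r,t))/\al(r)$, as desired.

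The main subtlety (rather than obstacle) is the positivity argument: we need $\bt_\al(r,t) > 0$ on the whole trajectory so that $G$ and $1/\al$ are well-defined there, and we need local Lipschitzness of $\al$ only to invoke uniqueness for the IVP. Everything else is a routine chain-rule calculation and does not require any smoothness of $\al$ beyond continuity.
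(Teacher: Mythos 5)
Your proof is correct, and for the second identity it takes a genuinely different route from the paper's. The paper never introduces a closed form for $\bt_\al$; instead it exploits the semigroup property of the flow, rewriting the difference quotient in $r$ at the perturbed points $\rho = \bt_\al(r,\ta)$ (approaching $r$ from below) and $\rho = \kp_\al(r,\ta)$ (approaching from above, where $\kp_\al$ is the flow of the time-reversed ODE $\dot y = +\al(y)$) as ratios of difference quotients in $t$, and then identifies the two resulting limits as the left and right derivatives of $\bt_\al(r,t)$ in $r$. Your separation-of-variables argument via $G(y)=\int_1^y ds/\al(s)$ instead produces the explicit representation $\bt_\al(r,t)=G^{-1}(G(r)-t)$ and reads off $\partial_r\bt_\al$ from the inverse function theorem. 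Both hinge on the same two facts — that $\al$ need only be continuous and positive away from $0$ for the differentiation steps, and that local Lipschitzness guarantees the trajectory never reaches $0$ (your uniqueness argument; the paper's Lemma~\ref{lemma:beta-comparison-principle} gives the quantitative version $\bt_\al(r,t)\ge re^{-Lt}$). Your route is arguably cleaner and yields joint $C^1$ regularity of $\bt_\al$ on $(0,\ii)^2$ essentially for free, at the cost of introducing the auxiliary function $G$ and verifying that $G(r)-t$ stays in the (open) range of $G$, which your positivity argument does supply; the paper's route stays entirely within flow identities but must separately match one-sided limits.
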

\begin{proof}
    It follows immediately from the definition of $\bt_\al$ that $\pp{t} \bt_\al(r,t) = -\al(\bt_\al(r,t))$ for all $r > 0$ and $t \ge 0$ (where for $t = 0$ the partial derivative is understood to be one-sided).

    Now fix some particular $r,t > 0$.
    For each $\ta > 0$,
    \begin{equation*}
        \frac{\bt_\al\lf(\bt_\al(r,\ta), t \rg) - \bt_\al(r,t)}{\bt_\al(r,\ta) - r} 
        = \frac{\lf(\bt_\al(r, t + \ta) - \bt_\al(r,t)\rg) / \ta}{(\bt_\al(r,\ta) - r) / \ta}.
    \end{equation*}
    Thus
    \begin{align*}
        \frac{\bt_\al\lf(\bt_\al(r,\ta), t \rg) - \bt_\al(r,t)}{\bt_\al(r,\ta) - r} \overset{\ta \to 0^+}{\longrightarrow} \frac{\pp{t}\bt_\al(r,t)}{\pp{t}\bt_\al(r,0)} = \frac{\al(\bt_\al(r,t))}{\al(r)}.
    \end{align*}

    Next, notice that for each $\ta \in (0,\min\{b_{\al}(r),t\})$,
    \begin{equation*}
        \frac{\bt_\al\lf(\kp_\al(r,\ta), t \rg) - \bt_\al(r,t)}{\kp_\al(r,\ta) - r} 
        = \frac{\lf(\bt_\al(r, t - \ta) - \bt_\al(r,t)\rg) / \ta}{(\kp_\al(r,\ta) - r) / \ta}.
    \end{equation*}
    Thus
    \begin{align*}
        \frac{\bt_\al\lf(\kp_\al(r,\ta), t \rg) - \bt_\al(r,t)}{\kp_\al(r,\ta) - r} \overset{\ta \to 0^+}{\longrightarrow} \frac{-\pp{t}\bt_\al(r,t)}{\pp{t} \kp_\al(r,0)} = \frac{\al(\bt_\al(r,t))}{\al(r)}.
    \end{align*}
    The desired result follows from noticing these these limits are in fact the left and right derivatives of $\bt_\al(r,t)$ w.r.t. $r$.
\end{proof}

\begin{lemma}\label{lemma:derivatives-of-kappa}
    Let $(t,r) \in A_\al$ with $t, r > 0$.
    Then $\pp{t} \kp_\al(r,t) = \al(\kp_\al(r,t))$ and $\pp{r} \kp_\al(r,t) = \al(\kp_\al(r,t)) / \al(r)$.
\end{lemma}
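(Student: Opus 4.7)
The plan is to mirror the proof of Lemma \ref{lemma:derivatives-of-beta} almost verbatim, with the roles of $\kp_\al$ and $\bt_\al$ interchanged. The $t$-derivative is essentially free: by the very definition of $\kp_\al(r,\cdot)$ as the solution to \eqref{proof:inverse-dynamics}, we have $\pp{t}\kp_\al(r,t) = \al(\kp_\al(r,t))$ for every $t \in [0,b_\al(r))$, including the one-sided derivative at $t = 0$. So the entire content of the lemma lies in computing $\pp{r}\kp_\al(r,t)$, and I will reduce this to difference quotients in $\tau$ using two flow identities.

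The two identities I will use are: (i) the autonomous-flow semigroup $\kp_\al(\kp_\al(r,\tau),t) = \kp_\al(r,t+\tau)$, valid whenever both sides are defined, and (ii) the inverse-flow relation $\kp_\al(\bt_\al(r,\tau),t) = \kp_\al(r,t-\tau)$, valid for small enough $\tau > 0$ (this follows because $\bt_\al(r,\tau)$ is, by definition, the point from which flowing forward under $\dot y = \al(y)$ for time $\tau$ returns to $r$). Since $\al(r) > 0$, for small $\tau > 0$ we have $\kp_\al(r,\tau) > r$ and $\bt_\al(r,\tau) < r$, so these two sequences approach $r$ from opposite sides, sweeping out neighborhoods on either side of $r$ as $\tau \to 0^+$.

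I will compute the right derivative using (i): for small $\tau > 0$,
\begin{equation*}
\frac{\kp_\al(\kp_\al(r,\tau),t) - \kp_\al(r,t)}{\kp_\al(r,\tau) - r} = \frac{\lf(\kp_\al(r,t+\tau) - \kp_\al(r,t)\rg)/\tau}{(\kp_\al(r,\tau) - r)/\tau} \overset{\tau \to 0^+}{\longrightarrow} \frac{\pp{t}\kp_\al(r,t)}{\pp{t}\kp_\al(r,0)} = \frac{\al(\kp_\al(r,t))}{\al(r)},
\end{equation*}
and the left derivative using (ii):
\begin{equation*}
\frac{\kp_\al(\bt_\al(r,\tau),t) - \kp_\al(r,t)}{\bt_\al(r,\tau) - r} = \frac{\lf(\kp_\al(r,t-\tau) - \kp_\al(r,t)\rg)/\tau}{(\bt_\al(r,\tau) - r)/\tau} \overset{\tau \to 0^+}{\longrightarrow} \frac{-\pp{t}\kp_\al(r,t)}{-\al(r)} = \frac{\al(\kp_\al(r,t))}{\al(r)},
\end{equation*}
where the denominator limit uses $\pp{t}\bt_\al(r,0) = -\al(r)$ from Lemma \ref{lemma:derivatives-of-beta}. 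Equality of the one-sided limits gives the claim.

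The only mildly delicate point (and the place I expect the main obstacle to hide) is checking that the flow identities (i) and (ii) actually hold for all sufficiently small $\tau > 0$ while remaining inside $A_\al$: one must note that because $(r,t) \in A_\al$ with $t > 0$, the trajectory $\kp_\al(r,\cdot)$ is defined on some closed interval containing $[0,t]$, hence on $[0,t+\tau]$ and $[0,t-\tau]$ for all sufficiently small $\tau > 0$, and $\kp_\al(r,\tau),\bt_\al(r,\tau)$ both lie in an arbitrarily small neighborhood of $r$, where the flow is still defined up to times close to $t$. Continuous dependence of the ODE flow on initial data handles everything; no further estimates are required.
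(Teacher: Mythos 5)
Your proof is correct, but it takes a different route from the paper's. The paper dispatches the $r$-derivative in two lines: it observes that $\kp_\al(\cdot,t)$ is the inverse of $\bt_\al(\cdot,t)$ (via the identity $\bt_\al(\kp_\al(\rh,\ta),\ta)=\rh$), applies the inverse function theorem, and substitutes the formula $\pp{r}\bt_\al(\rho,t)=\al(\bt_\al(\rho,t))/\al(\rho)$ from Lemma \ref{lemma:derivatives-of-beta} at $\rho=\kp_\al(r,t)$, using $\bt_\al(\kp_\al(r,t),t)=r$ to simplify. You instead re-run the difference-quotient machinery of Lemma \ref{lemma:derivatives-of-beta} with $\kp_\al$ in the starring role, using the forward semigroup identity for the right derivative and the inverse-flow identity for the left derivative. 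Both arguments are sound; the reparametrization step (passing from a limit along the paths $\ta\mapsto\kp_\al(r,\ta)$ and $\ta\mapsto\bt_\al(r,\ta)$ to the genuine one-sided derivatives in $r$) is justified exactly as in the paper's proof of the $\bt_\al$ lemma, since these paths are continuous, strictly monotone, and sweep out one-sided neighborhoods of $r$ because $\al(r)>0$, and your domain checks for small $\ta$ are the right ones. What the paper's approach buys is brevity and reuse of the already-proved $\bt_\al$ formula; what yours buys is that it avoids invoking the inverse function theorem and only needs the trivial endpoint derivative $\pp{t}\bt_\al(r,0)=-\al(r)$ rather than the full $\pp{r}\bt_\al$ formula, making the two lemmas' proofs structurally parallel.
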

\begin{proof}
The result for the partial derivative w.r.t. $t$ follows directly from the definition of $\kp_\al$.
Also by definition of $\kp_\al$, for all $(\rh,\ta) \in A_\al$ we have $\bt_\al(\kp_\al(\rh,\ta),\ta) = \rh$.
Then by the inverse function theorem and Lemma \ref{lemma:derivatives-of-beta},
\begin{equation*}
\pp{r} \kp_\al(r,t) = \frac{\al(\kp_\al(r,t))}{\al(\bt_\al(\kp_\al(r,t),t))} = \frac{\al(\kp_\al(r,t))}{\al(r)}.
\end{equation*}
\end{proof}

\begin{lemma}\label{lemma:beta-comparison-principle}
Let $R > 0$.
There is an $L > 0$ such that $\bt_\al(r,t) \ge r e^{-Lt}$ for all $r \in [0,R]$ and $t \ge 0$.
\end{lemma}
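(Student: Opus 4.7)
\textbf{Proof proposal for Lemma \ref{lemma:beta-comparison-principle}.}
The plan is to exploit the fact that $\al$ is locally Lipschitz with $\al(0)=0$ to linearly upper bound $\al$ near zero, then compare the ODE $\dot y = -\al(y)$ against the linear ODE $\dot y = -Ly$.

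First I would dispose of the trivial case $r = 0$: here $\bt_\al(0,t)\equiv 0$ by uniqueness of solutions to \eqref{proof:inverse-dynamics}'s reverse ($\dot y = -\al(y)$, $y(0) = 0$), and $r e^{-Lt} = 0$, so the inequality holds with equality for any $L$. Then I would assume $r \in (0, R]$ and use the fact that $\bt_\al(r,\cdot)$ is non-increasing (since $\dot y = -\al(y) \le 0$ whenever $y \ge 0$), so the trajectory $y(t) := \bt_\al(r,t)$ remains in $[0, r] \subseteq [0, R]$ for all $t \ge 0$. Strict positivity $y(t) > 0$ also follows from uniqueness (the constant $0$ solution would otherwise be hit in finite time, contradicting uniqueness of the locally Lipschitz ODE).

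Next, since $\al$ is locally Lipschitz and $\al(0) = 0$, there exists $L > 0$ (the Lipschitz constant of $\al$ on $[0, R]$) such that
\begin{equation*}
    \al(y) = \al(y) - \al(0) \le L\,y \quad \text{for all } y \in [0, R].
\end{equation*}
Combining this with the differential equation, for all $t \ge 0$ we have $\dot y(t) = -\al(y(t)) \ge -L\,y(t)$. Since $y(t) > 0$, I can divide and integrate: $\frac{d}{dt}\ln y(t) \ge -L$, whence $\ln y(t) \ge \ln r - L t$, i.e. $\bt_\al(r,t) = y(t) \ge r e^{-Lt}$. This gives the bound uniformly for all $r \in (0, R]$ with the same constant $L$, which combined with the $r = 0$ case completes the argument.

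I do not anticipate any substantive obstacle: the only subtlety is confirming that $y(t)$ stays inside the compact set $[0, R]$ on which we have the Lipschitz bound, which is immediate from monotonicity, and ensuring $y(t)>0$ so that the logarithmic step is valid, which follows from ODE uniqueness.
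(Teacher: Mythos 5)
Your proposal is correct and follows essentially the same route as the paper: bound $\al(y)\le Ly$ on $[0,R]$ using the local Lipschitz property and $\al(0)=0$, observe the trajectory stays in $[0,R]$, and conclude from the resulting differential inequality $\dot y \ge -Ly$. The only cosmetic difference is that you carry out the comparison explicitly via the logarithm (with the attendant care about $y(t)>0$ and the $r=0$ case), whereas the paper simply invokes the comparison principle.
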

\begin{proof}
    Because $\al$ is locally Lipschitz, we can choose an $L > 0$ such that $\al(r) \le Lr$ for each $r \in [0,R]$.
    Fix a particular $r \in [0,R]$.
    Recall that $\bt_\al(r,\cdot)$ is the solution to the ODE $\dot{y} = -\al(y)$ corresponding to the initial condition $y(0) = r$.
    For all $t \ge 0$, we have $\bt_\al(r,t) \le R$ so that
    \begin{equation*}
        \pp{t}\bt_\al(r,t) = - \al(\bt_\al(r,t)) \\
        \ge -L\bt_\al(r,t).
    \end{equation*}
    The result then follows from the comparison principle.
\end{proof}

\begin{lemma}\label{lemma:v-w-equivalence}
    Suppose that $\cbvf,w:\Rn \tms \Rge \to \Rge$ are continuous and satisfy $\bt_\al(\cbvf(x,T),T) = w(x,T)$ for all $(x,T) \in \Rn \tms \Rge$.
    Then $w$ is a viscosity solution of
    \begin{equation}\label{eqn:transformed-time-dependent-HJ-PDE}
        \max\lf\{ \pp{T} w - \vanillahamiltonian(x,\nabla w), w - \bt_{\al}(\terminalpayoff(x),T) \rg\} = 0
    \end{equation}
    iff $\cbvf$ is viscosity solution of \eqref{eqn:time-dependent-HJ-PDE}.
\end{lemma}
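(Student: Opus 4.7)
The plan is to carry out a change-of-variables argument for viscosity solutions. The key fact is that for each fixed $T \ge 0$, the map $r \mapsto \bt_\al(r,T)$ is a strictly increasing bijection of $\Rge$ onto itself with inverse $\kp_\al(\cdot,T)$, smooth on $(0,\ii)$ by Lemmas \ref{lemma:derivatives-of-beta} and \ref{lemma:derivatives-of-kappa}. Since $w = \bt_\al(\cbvf, T)$ is exactly this monotonic transformation, $C^1$ test functions for one PDE can be pulled back to $C^1$ test functions for the other whenever the common value is strictly positive. I would establish both the sub-solution and super-solution conditions in both directions; the four cases being closely analogous, I will describe here in detail only the implication ``$\cbvf$ sub-solution of \eqref{eqn:time-dependent-HJ-PDE} $\Rightarrow$ $w$ sub-solution of \eqref{eqn:transformed-time-dependent-HJ-PDE}.''

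Fix $(x_0,T_0) \in \Rn \tms (0,\ii)$ and a $C^1$ test function $\ps$ with $w - \ps$ attaining a local maximum at $(x_0,T_0)$; after a constant shift I may assume $\ps(x_0,T_0) = w(x_0,T_0)$. In the regular case $w(x_0,T_0) > 0$, I would set $\bar\ph(x,T) := \kp_\al(\ps(x,T),T)$ locally, which is $C^1$ because $\ps$ is locally positive; strict monotonicity of $\kp_\al(\cdot,T)$ hands $\cbvf - \bar\ph$ the same local maximum at $(x_0,T_0)$. The sub-solution property of $\cbvf$ then gives $\pp{T}\bar\ph \le \hamiltonian(x_0,\cbvf,\nabla\bar\ph)$, and a chain-rule computation via Lemma \ref{lemma:derivatives-of-kappa} yields $\nabla\bar\ph = \frac{\al(\cbvf)}{\al(w)}\nabla\ps$ and $\pp{T}\bar\ph = \frac{\al(\cbvf)}{\al(w)}\pp{T}\ps + \al(\cbvf)$ at $(x_0,T_0)$. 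Substituting and cancelling the positive factor $\al(\cbvf)/\al(w)$ together with the additive $\al(\cbvf)$ terms reduces this inequality to $\pp{T}\ps \le \vanillahamiltonian(x_0,\nabla\ps)$, while monotonicity of $\bt_\al(\cdot,T_0)$ transfers $\cbvf(x_0,T_0) \le \terminalpayoff(x_0)$ to $w(x_0,T_0) \le \bt_\al(\terminalpayoff(x_0),T_0)$. The reverse implication uses $\ps(x,T) := \bt_\al(\ph(x,T),T)$ together with Lemma \ref{lemma:derivatives-of-beta} in place of Lemma \ref{lemma:derivatives-of-kappa}.

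The hard part will be the degenerate case $w(x_0,T_0) = \cbvf(x_0,T_0) = 0$, where the change of variables breaks down since $\kp_\al$ and $\bt_\al$ are not differentiable at $r = 0$. For the sub-solution side the argument is still direct: non-negativity of $w$ combined with the local-max condition forces $\ps \ge w \ge 0$ locally with $\ps(x_0,T_0) = 0$, so $\ps$ has a local minimum at the interior point $(x_0,T_0)$, yielding $\nabla\ps(x_0,T_0) = 0$ and $\pp{T}\ps(x_0,T_0) = 0$; the sub-solution inequality for $w$ is then trivially satisfied since $\vanillahamiltonian(x_0,0) = 0$ and $\bt_\al(\terminalpayoff(x_0),T_0) \ge 0$ (and symmetrically for $\cbvf$, using $\al(0) = 0$). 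The super-solution side is more delicate, since a $C^1$ test function $\ps$ with $w - \ps$ attaining a local minimum at such a point need not be locally non-negative, blocking the direct pullback $\kp_\al(\ps,T)$. To handle this I would pass to an equivalent formulation of viscosity super-solutions in terms of sub-differentials and split on whether $\terminalpayoff(x_0) = 0$ (in which case both obstacle terms vanish and the super-solution condition is automatic on each side) or $\terminalpayoff(x_0) > 0$ (in which case both conditions reduce to pure Hamiltonian inequalities $q \ge \vanillahamiltonian(x_0,p)$ on the respective sub-differential sets at $(x_0,T_0)$), and then relate the two sub-differentials via the local expansion of $\bt_\al(r,T)$ in $r$ near $r = 0$.
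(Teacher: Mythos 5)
Your handling of the regular case $w(x_0,T_0)=\cbvf(x_0,T_0)>0$ and of the degenerate sub-solution case is essentially the paper's argument: transport test functions through $\bt_\al$ and $\kp_\al$, invoke Lemmas \ref{lemma:derivatives-of-beta} and \ref{lemma:derivatives-of-kappa} together with positive homogeneity of $\vanillahamiltonian$ in the gradient, and in the degenerate sub-solution case note that the test function has an interior local minimum so its derivatives vanish. The genuine gap is the degenerate super-solution case, which you correctly identify as the hard part but do not resolve. The mechanism you propose --- relating the sub-differentials of $w$ and $\cbvf$ at a common zero ``via the local expansion of $\bt_\al(r,T)$ in $r$ near $r=0$'' --- is not available: $\al$ is only locally Lipschitz, so $\pp{r}\bt_\al(r,T)=\al(\bt_\al(r,T))/\al(r)$ need not converge as $r\to0^+$, and $\bt_\al(\cdot,T)$ admits no first-order expansion at $0$ in general; all one has is the two-sided bound $re^{-LT}\le\bt_\al(r,T)\le r$ furnished by Lemma \ref{lemma:beta-comparison-principle}.

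The paper closes this case with two concrete devices that your sketch lacks. For ``$\cbvf$ super-solution $\Rightarrow$ $w$ super-solution,'' since $w=\bt_\al(\cbvf,T)\le\cbvf$ with equality at $(x_0,T_0)$, any $\ph$ touching $w$ from below also touches $\cbvf$ from below, so the super-solution inequality for $\cbvf$ transfers directly (using $\al(0)=0$ and the equivalence $\bt_\al(\terminalpayoff(x_0),T_0)=0\iff\terminalpayoff(x_0)=0$). For the harder ``$w$ super-solution $\Rightarrow$ $\cbvf$ super-solution,'' a $\ph$ touching $\cbvf$ from below need not touch $w$ from below, but the rescaled function $c\ph$ with $c=e^{-LS}$ (where $L$ comes from Lemma \ref{lemma:beta-comparison-principle} and $S$ bounds $T$ on the chosen neighborhood) does, because $w=\bt_\al(\cbvf,T)\ge \cbvf e^{-LT}\ge c\cbvf\ge c\ph$ wherever $\ph\ge0$, and $w\ge0\ge c\ph$ otherwise; positive homogeneity of $\vanillahamiltonian$ then allows the factor $c$ to be divided out. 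Without this rescaling step, or an equivalent substitute for the missing differentiability of $\bt_\al(\cdot,T)$ at $0$, your proof of the lemma is incomplete.
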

\begin{proof}
    ($\implies$) Suppose that $w$ is a viscosity solution of \eqref{eqn:transformed-time-dependent-HJ-PDE}.
    We first establish that $\cbvf$ is a viscosity sub-solution of $\eqref{eqn:time-dependent-HJ-PDE}$.
    Let $(x_0,T_0) \in \Rn \tms (0,\ii)$. 
    Suppose that $\ph:\Rn \tms (0,\ii) \to \R$ is continuously differentiable and $\cbvf - \ph$ has a local maximum at $(x_0,T_0)$.
    Without loss of generality we may assume $\ph(x_0,T_0) = \cbvf(x_0,T_0)$.
    We wish to show that
    \begin{align}\label{proof:change-of-coordinates-wwts-1}
        \max\Big\{ &\pp{T} \ph(x_0,T_0) - \hamiltonian(x_0,\ph(x_0,T_0), \nabla \ph(x_0,T_0)),\nonumber\\
        &\ph(x_0,T_0) - \terminalpayoff(x_0) \Big\} \le 0.
    \end{align}

    First suppose $\cbvf(x_0,T_0) = \ph(x_0,T_0) = 0$.
    Since $\cbvf - \ph$ has a local maximum at $(x_0,T_0)$ and since $\cbvf$ is non-negative it follows that $\ph$ has a local minimum at $(x_0,T_0)$.
    Thus $\pp{T} \ph(x_0,T_0) = 0$ and $\nabla \ph(x_0,T_0) = 0$, so \eqref{proof:change-of-coordinates-wwts-1} holds.

    Next suppose that $\cbvf(x_0,T_0) = \ph(x_0,T_0) > 0$.
    Choose an open neighborhood $\Nc \sbs \Rn \tms (0,\ii)$ of $(x_0,T_0)$ such that $\ph > 0$ everywhere in $\Nc$.
    Let $\ps: \Nc \to \Rge$ be defined by $\ps(x,T) = \bt_\al(\ph(x,T),T)$.
    Then $\ps(x_0,T_0) = \bt_\al(\cbvf(x_0,T_0),T_0) = w(x_0,T_0)$ and $w|\Nc - \ps$ has a local maximum at $(x_0,T_0)$.
    Since $w|\Nc$ is a viscosity solution of $\eqref{eqn:transformed-time-dependent-HJ-PDE}$ on $\Nc$, it follows that
    \begin{align*}
        \max\Big\{ &\pp{T} \ps(x_0,T_0) - \vanillahamiltonian(x_0,\nabla \ps(x_0,T_0)),\\
        &\ps(x_0,T_0) - \bt_{\al}(\terminalpayoff(x_0),T_0) \Big\} \le 0.
    \end{align*}
    For convenience, let $a = \frac{\partial}{\partial t}\bt_\al\lf( \ph(x_0,T_0),T_0 \rg)$ and let $b = \frac{\partial}{\partial r}\bt_\al\lf( \ph(x_0,T_0),T_0 \rg)$ so that
    \begin{align*}
        \max\Big\{& a + b \pp{T} \ph(x_0,T_0) - \vanillahamiltonian\lf(x_0, b \nabla \ph(x_0,T_0) \rg),\\
        &\bt_\al(\ph(x_0,T_0),T_0) - \bt_\al(\terminalpayoff(x_0),T_0) \Big\} \le 0.
    \end{align*}
    Since $a/b = -\al(\ph(x_0,T_0))$ with $b > 0$ and since $\bt_\al(\cdot,T_0)$ is strictly increasing, the above inequality implies \eqref{proof:change-of-coordinates-wwts-1}. 

    We next establish that $\cbvf$ is a viscosity super-solution of $\eqref{proof:change-of-coordinates-wwts-1}$.
    Let $(x_0,T_0) \in \Rn$. 
    Suppose that $\ph:\Rn \tms (0,\ii) \to \R$ is continuously differentiable and $\cbvf - \ph$ has a local minimum at $(x_0,T_0)$.
    Without loss of generality we may assume $\ph(x_0,T_0) = \cbvf(x_0,T_0)$.
    We wish to show that
    \begin{align}\label{proof:change-of-coordinates-wwts-2}
        \max\Big\{ &\pp{T} \ph(x_0,T_0) - \hamiltonian(x_0,\ph(x_0,T_0), \nabla \ph(x_0,T_0)),\nonumber\\
        &\ph(x_0,T_0) - \terminalpayoff(x_0) \Big\} \ge 0.
    \end{align}
    
    First suppose $\cbvf(x_0,T_0) = \ph(x_0,T_0) = 0$.
    Since $\cbvf - \ph$ has a local minimum at $(x_0,T_0)$,
    we can choose some compact neighborhood $\Nc \sbs \Rn \tms (0,\ii)$ of $(x_0,T_0)$ such that $\cbvf \ge \ph$ in $\Nc$.
    Let $R = \max_{(x,T) \in \Nc} \cbvf(x,T) + 1$.
    By Lemma \ref{lemma:beta-comparison-principle}, we can choose an $L > 0$ such that $\bt_\al(r,t) \ge r e^{-Lt}$ for each $r \in [0,R]$ and $t \ge 0$.
    Moreover, since $\Nc$ is compact, we can choose an $S \in \R$ such that $S > T$ for each $(x,T) \in N$.
    Setting $c = e^{-LS}$, it follows that $w(x,T) - c\ph(x,T) = \bt(v(x,T),T) - c\ph(x,T) \ge 0$ for each $(x,T) \in N$.
    Since $w(x_0,T_0) - c\ph(x_0,T_0) = 0$, it follows that $w - c\ph$ has a local minimum at $(x_0,T_0)$.
    Eq. $\eqref{proof:change-of-coordinates-wwts-2}$ then follows from
    \begin{align*}
        \max\Big\{& c \pp{T} \ph(x_0,T_0) - \vanillahamiltonian\lf(x_0, c \nabla \ph(x_0,T_0) \rg),\\
        & - \bt_\al(\terminalpayoff(x_0),T_0) \Big\} \ge 0.
    \end{align*}

    The case $\cbvf(x_0,T_0) = \ph(x_0,T_0) > 0$ follows analogously to the same case in the sub-solution proof.

    ($\impliedby$) Suppose that $\cbvf$ is a viscosity solution of \eqref{eqn:time-dependent-HJ-PDE}.
    We first establish that $w$ is a viscosity sub-solution of \eqref{eqn:transformed-time-dependent-HJ-PDE}.
    Let $(x,T) \in \Rn \tms (0,\ii)$. 
    Suppose that $\ph:\Rn \tms (0,\ii) \to \R$ is continuously differentiable and $w - \ph$ has a local maximum at $(x_0,T_0)$.
    Without loss of generality we may assume $\ph(x_0,T_0) = \cbvf(x_0,T_0)$.
    We wish to show that
    \begin{align}\label{proof:change-of-coordinates-wwts-3}
        \max\Big\{ &\pp{T} \ph(x_0,T_0) - \vanillahamiltonian(x_0, \nabla \ph(x_0,T_0)),\nonumber\\
        &\ph(x_0,T_0) - \bt_\al(\terminalpayoff(x_0),T_0) \Big\} \le 0.
    \end{align}
    
    First suppose $w(x_0,T_0) = \ph(x_0,T_0) = 0$.
    Since $w - \ph$ has a local maximum at $(x_0,T_0)$ and since $w$ is non-negative it follows that $\ph$ has a local minimum at $(x_0,T_0)$.
    Thus $\pp{T} \ph(x_0,T_0) = 0$ and $\nabla \ph(x_0,T_0) = 0$, so \eqref{proof:change-of-coordinates-wwts-3} holds.

    Next suppose that $w(x_0,T_0) = \ph(x_0,T_0) > 0$.
    Choose a neighborhood $\Nc \sbs \Rn \tms (0,\ii)$ of $(x_0,T_0)$ such that $(\ph(x,T),T) \in A_\al$ and $\ph(x,T) > 0$ for each $(x,T) \in \Nc$.
    Let $\ps: \Nc \to \Rge$ be defined by $\ps(x,T) = \kp_\al(\ph(x,T),T)$.
    Then $\ps(x_0,T_0) = \kp_\al(w(x_0,T_0),T_0) = \cbvf(x_0,T_0)$ and $\cbvf|\Nc - \ps$ has a local maximum at $(x_0,T_0)$.
    Since $\cbvf|\Nc$ is a viscosity solution of $\eqref{eqn:time-dependent-HJ-PDE}$ on $\Nc$, it follows that
    \begin{align*}
        \max\Big\{ &\pp{T} \ps(x_0,T_0) - \hamiltonian(x_0,\ps(x_0,T_0), \nabla \ps(x_0,T_0)),\nonumber\\
        &\ps(x_0,T_0) - \terminalpayoff(x_0) \Big\} \le 0.
    \end{align*}
    For convenience, let $a = \frac{\partial}{\partial t}\kp_\al\lf( \ph(x_0,T_0),T_0 \rg)$ and let $b = \frac{\partial}{\partial r}\kp_\al\lf( \ph(x_0,T_0),T_0 \rg)$ so that
    \begin{align*}
        \max\Big\{& a + b \pp{T} \ph(x_0,T_0) \\
        &- \hamiltonian\lf(x_0, \kp_\al(\ph(x_0,T_0),T_0), b \nabla \ph(x_0,T_0) \rg),\\
        &\kp_\al(\ph(x_0,T_0),T_0) - \terminalpayoff(x_0) \Big\} \le 0.
    \end{align*}
    Since $a = \al(\kp_\al(\ph(x_0,T_0),T_0)$ and $b > 0$ by Lemma \ref{lemma:derivatives-of-kappa}, the above inequality implies \eqref{proof:change-of-coordinates-wwts-3}.

    We next establish that $w$ is a viscosity super-solution of \eqref{eqn:transformed-time-dependent-HJ-PDE}.
    Let $(x_0,T_0) \in \Rn \tms (0,\ii)$. 
    Suppose that $\ph:\Rn \tms (0,\ii) \to \R$ is continuously differentiable and $w - \ph$ has a local minimum at $(x_0,T_0)$.
    Without loss of generality we may assume $\ph(x_0,T_0) = \cbvf(x_0,T_0)$.
    We wish to show that
    \begin{align}\label{proof:change-of-coordinates-wwts-4}
        \max\Big\{ &\pp{T} \ph(x_0,T_0) - \vanillahamiltonian(x_0, \nabla \ph(x_0,T_0)),\nonumber\\
        &\ph(x_0,T_0) - \bt_\al(\terminalpayoff(x_0),T_0) \Big\} \ge 0.
    \end{align}
    
    First suppose $w(x_0,T_0) = \ph(x_0,T_0) = 0$.
    Since $\cbvf(x_0,T_0) = 0$ and $\cbvf \ge w$ and since $w - \ph$ has a local minimum at $(x_0,T_0)$, it follows that $\cbvf - \ph$ has a local minimum at $(x_0,T_0)$.
    Thus
    \begin{align*}
        \max\Big\{ &\pp{T} \ph(x_0,T_0) - \hamiltonian(x_0, 0, \nabla \ph(x_0,T_0)), - \terminalpayoff(x_0) \Big\} \ge 0.
    \end{align*}
    But the above implies \eqref{proof:change-of-coordinates-wwts-4}.
    The case $w(x_0,T_0) = \ph(x_0,T_0) > 0$ is similar to the sub-solution argument.

\end{proof}

\begin{lemma}\label{lemma:existence-and-uniqueness}
    Let $\terminalpayoff:\Rn \to \Rge$ be continuous, and let $\al:\Rge \to \Rge$ be locally Lipschitz class $\K$.
    Define $w:\Rn \tms \Rge \to \Rge$ by 
    $$w(x,T) = \sup_{\usig \in \usigs} \min_{t \in [0,T]} \bt_\al \lf(\terminalpayoff \lf( \traj(t) \rg), T - t\rg).$$
    Then $w$ is continuous and is the unique viscosity solution of \eqref{eqn:transformed-time-dependent-HJ-PDE}.
\end{lemma}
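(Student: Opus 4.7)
The plan is to establish in sequence: continuity of $w$, a dynamic programming principle (DPP), the viscosity solution property for \eqref{eqn:transformed-time-dependent-HJ-PDE} with initial condition $w(\cdot, 0) \equiv \terminalpayoff(\cdot)$, and finally uniqueness via a comparison principle. For continuity, I would use Gr\"{o}nwall's inequality together with the Lipschitz assumption on $f$ and compactness of $\uvals$ to show that $\traj(t)$ depends continuously on $x$ uniformly in $\usig \in \usigs$ and $t$ on compact sets. Combined with uniform continuity of $\terminalpayoff$ and $\bt_\al$ on compact sets and the standard bound $|\sup\min A - \sup\min B| \le \sup |A - B|$, this yields continuity of $w$ on $\Rn \tms \Rge$.

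For the DPP, I would prove that for each $(x,T) \in \Rn \tms \Rge$ and each $\tau \in [0,T]$,
\[
w(x,T) = \sup_{\usig \in \usigs} \min\lf\{ \min_{t \in [0,\tau]} \bt_\al(\terminalpayoff(\traj(t)), T-t),\ w(\traj(\tau), T-\tau) \rg\}.
\]
The $(\le)$ direction comes from splitting $\min_{t \in [0,T]}$ at $\tau$, applying the change of variable $s = t - \tau$ to the tail, and taking the sup over the tail signal; the $(\ge)$ direction follows by concatenation of a chosen control on $[0,\tau]$ with a near-optimal tail control.

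Given the DPP, the PDE derivation splits to match the ``max'' structure. The obstacle side $w(x,T) \le \bt_\al(\terminalpayoff(x), T)$ holds globally by taking $t = 0$ in the defining min, yielding the sub-solution obstacle inequality automatically. For the Hamilton-Jacobi side I would apply standard infinitesimal DPP arguments: for sub-solution, use a near-optimal control on $[0,\tau]$ in the DPP together with the local max inequality on $w - \ph$, expand $\ph$ via the fundamental theorem of calculus along the trajectory, divide by $\tau$, and let $\tau \to 0^+$; for super-solution, split on whether $w(x_0, T_0) < \bt_\al(\terminalpayoff(x_0), T_0)$ (in which case continuity lets one drop the obstacle term from the DPP on a neighborhood, and testing with arbitrary constant controls $u \in \uvals$ yields $\pp{T}\ph - \vanillahamiltonian(x_0, \nabla \ph) \ge 0$) or equality (in which case the obstacle side of the super-solution condition is trivially satisfied).

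For uniqueness, I would appeal to a standard comparison principle for obstacle-type HJ-PDEs (cf.\ Chapters III and V of \cite{Bardi-Dolcetta-Optimal-Control}), using that $\vanillahamiltonian(x,\lambda)$ is Lipschitz in $(x,\lambda)$ as a consequence of Lipschitzness of $f$ and compactness of $\uvals$; the initial condition $w(\cdot,0) \equiv \terminalpayoff(\cdot)$ is immediate since $T = 0$ collapses the interval and $\bt_\al(r,0) = r$. The main obstacle I anticipate is the delicate case analysis in the super-solution condition at ``boundary'' points where the obstacle is just barely active, together with the uniform convergence bookkeeping needed to pass to the limit $\tau \to 0^+$ with the near-optimal control varying with $\tau$; both are standard but technical steps in obstacle-HJ theory.
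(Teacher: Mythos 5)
Your proposal is correct and follows essentially the same route as the paper: the paper likewise establishes continuity from Lipschitzness of $f$ and continuity of $\bt_\al$, obtains the viscosity-solution property via the standard dynamic programming argument (which it delegates to a reference rather than writing out the DPP and the sub/super-solution case analysis you sketch), and gets uniqueness from the comparison theorems for obstacle-type HJ equations in \cite{Bardi-Dolcetta-Optimal-Control}. Your write-up simply fills in more of the standard details that the paper cites away.
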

\begin{proof}
Continuity of $w$ follows from observing
\begin{align*}
    |w(x,T) - w(y,S)| \le \sup_{\usig \in \usigs} &\Big\vert \min_{t \in [0,T]}  \bt_\al \lf( \terminalpayoff \lf( \traj(t) \rg), T - t \rg) \\
    &- \min_{s \in [0,S]} \bt_\al \lf(\terminalpayoff \lf( \xsig_{y}^\usig(s) \rg), S - t\rg)\Big\vert
\end{align*}
and recalling that $\bt_\al$ is continuous and the dynamics $f$ are Lipschitz.
That $w$ is a viscosity solution of $\eqref{eqn:transformed-time-dependent-HJ-PDE}$ is a standard dynamic programming argument, see e.g. \cite{Fisac-Sastry-Reach-Avoid-2015}.
The uniqueness result is a straightforward extension of Theorems III.3.12 and III.3.15 in \cite{Bardi-Dolcetta-Optimal-Control}. (See \cite{Barron_1989} for another approach under slightly different assumptions.)
\end{proof}

\begin{proof}[Proof of Theorem \ref{theorem:cb-vf}]
    Let $w$ be as in Lemma \ref{lemma:existence-and-uniqueness}.
    For each $(x,T) \in \Rn \tms (0,\ii)$, we have $\cbvf(x,T) = \kp_\al(w(x,T),T)$.
    As $w$ and $\kp_\al$ are continuous, $\cbvf$ is also.
    The result follows from Lemmas \ref{lemma:v-w-equivalence} and \ref{lemma:existence-and-uniqueness}.
\end{proof}

\subsection{Proof of Theorem \ref{theorem:main-theorem}}\label{subsection:appendix-main-theorem-proof}
\begin{proof}[Proof of Theorem \ref{theorem:main-theorem}]
($1\to2$) Suppose $\vcbf$ is a viscosity CBF w.r.t. $\al$.
Let $\ph:\Rn \to \R$ be continuously differentiable, and suppose $\vcbf - \ph$ has a local maximum at $x_0 \in \Rn$.
Then $\hamiltonian(x_0,\vcbf(x_0),\nabla \ph(x_0)) \ge 0$ by Lemma \ref{lemma:viscosity-cbf-is-hj-pde-sub-solution}.
Thus $-\min\{\hamiltonian(x_0,\vcbf(x_0),\nabla \ph(x_0)),0\} \le 0$.
On the other hand, for any continuously differentiable $\ph:\Rn \to \R$ and any $x_0 \in \Rn$, we trivially have $-\min\{\hamiltonian(x_0,\vcbf(x_0),\nabla \ph(x_0)),0\} \ge 0$.

($2\to1$) Suppose $\vcbf$ is a viscosity solution of \eqref{eqn:time-independent-HJ-PDE}.
Let $\ph:\Rn \to \R$ be continuously differentiable, and suppose $\vcbf - \ph$ has a local maximum at $x_0 \in \Rn$.
Then $-\min\{\hamiltonian(x_0,\vcbf(x_0),\nabla \ph(x_0)),0\} \le 0$, so that $\hamiltonian(x_0,\vcbf(x_0),\nabla \ph(x_0)) \ge 0$.

($2\to3$) Suppose $\vcbf$ is a viscosity solution of $\eqref{eqn:cbvf}$.
Define $\tilde{\cbvf}:\Rn \tms \Rge \to \R$ by
$\tilde{\cbvf}(x,T) = \vcbf(x)$.
It follows that $\vcbf$ is also viscosity solution of \eqref{eqn:time-dependent-HJ-PDE}.
Then $\cbvf = \tilde{\cbvf}$ by Theorem \ref{theorem:cb-vf}.

($3\to2$) Immediate from Theorem \ref{theorem:cb-vf}.

($3\to4$) Suppose that for each $T \ge 0$, we have $\cbvf(\cdot,T) \equiv \vcbf(\cdot)$.
Then for each $x \in \Rn$, we in particular have
\begin{align*}
    \bt_\al(\vcbf(x),1) &= \sup_{\usig \in \usigs} \min_{t \in [0,1]} \bt_\al\lf(\vcbf(\traj(t)), 1 - t \rg).
\end{align*}

Now, fix some $x \in \Rn$ and $\te \in [0,1)$.
We construct a control signal $\usig^* \in \usigs$ recursively.
Set $x_0 = x$ and $\te_0 = \te$.
Having defined $x_n \in \Rn$ and $\te_n \in [0,1)$, choose $\usig_{n+1} \in \usigs$ and $\te_{n+1} \in [0,1)$ such that
$\bt_\al(\te_n\vcbf(x_n),1) \le \min_{t \in [0,1]}  \bt_\al(\te_{n+1} \vcbf(\xsig_{x_{n}}^{\usig_{n+1}}(t)), 1-t)$.
It follows that for each $t \in [0,1]$, $\bt_\al(\bt_\al(\te_n \vcbf(x_n), t),1-t) = \bt_\al(\te_n \vcbf(x_n), 1) \le \bt_\al(\te_{n+1}\vcbf(\xsig_{x_n}^{\usig_{n+1}}(t)) , 1-t)$.
Thus 
\begin{equation}\label{proof:equivalence-theorem-3-4-1}
    \bt_\al(\te_n \vcbf(x_n), t) \le \te_{n+1}\vcbf(\xsig_{x_n}^{\usig_{n+1}}(t)) \textall t \in [0,1].
\end{equation}
Now set $x_{n+1} = \xsig_{x_{n}}^{\usig_{n+1}}(1)$.
We define $\usig \in \usigs$ by setting
$\usig(\cdot) = \usig_i(\cdot)$ on $[i-1,i)$ for each $i \in \N$.
Thus for each $n = 0,1,\dots$ and $t \in [0,1]$ we have
\begin{equation}\label{proof:equivalence-theorem-3-4-2}
    \startraj(n+t) = \xsig_{x_{n}}^{\usig_{n+1}}(t) \textall t \in [0,1].
\end{equation}

We show by induction that for each $k = 0,1,\dots$ we also have $\bt_\al(\te \vcbf(x), k + t) \le \te_{k+1} \vcbf(\xsig_{x}^{\usig^*}(k + t))$ for all $t \in [0,1]$, from which the result follows.
Indeed, the base case is immediate from \eqref{proof:equivalence-theorem-3-4-1}-\eqref{proof:equivalence-theorem-3-4-2}.
Now, suppose the induction hypothesis holds for $k$.
Then by \eqref{proof:equivalence-theorem-3-4-2}, $\bt_\al(\te \vcbf(x), k + 1) \le \te_{k+1} \vcbf(x_{k+1})$ so that for all $t \in [0,1]$,
\begin{align*}
    \bt_\al&(\te \vcbf(x), k + 1 + t) =
    \bt_\al(\bt_\al(\te \vcbf(x), k + 1),t) \\
    &\le \bt_\al(\te_{k+1} \vcbf(x_{k+1}), t) \le \te_{k+2}\vcbf(\xsig_{x_{k+1}}^{\usig_{k+2}}(t))\\
    &= \te_{k+2}\vcbf(\startraj(k+1+t)).
\end{align*}

($4\to3$)
Suppose $g$ provides the Barrier Guarantee w.r.t. $\al$.
Fix some particular $\te \in [0,1)$ and $x \in \Rn$, and choose a $\usig \in \usigs$ such that
$\bt_\al(\te\vcbf(x), t) \le \vcbf\lf(\traj(t)\rg)$
for all $t \ge 0$.
Then $\bt_\al(\te\vcbf(x), T) = \bt_\al\lf(\bt_\al(\te\vcbf(x), t), T-t\rg) \le \bt_\al\lf(\vcbf\lf(\traj(t)\rg), T-t \rg)$ for all $t \ge 0$ and $T \ge t$, .
Thus $\bt_\al(\te\vcbf(x), T) \le \sup_{\tilde{\usig} \in \usigs} \min_{t \in [0,T]} \bt_\al\lf(\vcbf\lf(\tildetraj(t)\rg), T-t \rg) \le \bt_\al(\vcbf(x), T)$ for all $T \ge 0$.
Letting $\te \to 1$ shows that $\bt_\al(\vcbf(x), T) = \sup_{\tilde{\usig} \in \usigs} \min_{t \in [0,T]} \bt_\al\lf(\vcbf\lf(\tildetraj(t)\rg), T-t \rg)$.
But then $\cbvf(\cdot,T) \equiv \vcbf(\cdot)$ for each $T \ge 0$.
\end{proof}

\subsection{Control Invariance Counter-Example}\label{subsection:appendix-counter-example}
It is not true that existence of a CBF $h$ generally certifies control-invariance of its \textit{non-strict} zero super-level set $\safeset_{\ge 0} := \{x \in \Rn \mid h(x) \ge 0\}$.
Consider the system in $\R^2$ with dynamics
$\dot{\xsig}_1 = 0, \dot{\xsig}_2 = \usig$ inside the unit ball, and 
$\dot{\xsig}_1 = \xsig_1(1 - \sqrt{\xsig_1^2 + \xsig_2^2}) / \sqrt{\xsig_1^2 + \xsig_2^2}$, $\dot{\xsig}_2 = \xsig_2(1 - \sqrt{\xsig_1^2 + \xsig_2^2}) / \sqrt{\xsig_1^2 + \xsig_2^2} + \usig$ outside of it,
where the admissible control set is $\uvals = \{-1,+1\}$.
Define the function $h:\R^2 \to \R^2$ by $h(x_1,x_2) = 1 - x_1^2 - x_2^2$.
To see that $h$ is indeed a CBF, observe that $\max_{u \in \mathcal{U}} \nabla h(x_1,x_2) \cdot f((x_1,x_2),u) = 2|x_2| \ge -h(x_1, x_2)$ inside the unit ball and
$\max_{u \in \mathcal{U}} \nabla h(x_1,x_2) \cdot f((x_1,x_2),u) = 2 \sqrt{x_1^2 + x_2^2} (\sqrt{x_1^2 + x_2^2} - 1) + 2|x_2| \ge -h(x_1, x_2)$ outside the unit ball.

We will show that the set $\safeset_{\ge 0} := \{(x_1,x_2) \in \R^2 \mid h(x_1,x_2) \ge 0\} = \{(x_1,x_2) \in \R^2 \mid x_1^2 + x_2^2 \le 1\}$ is not control-invariant.
To do so, consider the initial state $(\ba{x}_1, \ba{x}_2) := (1,0) \in \safeset_{\ge 0}$, and let $\usig \in \usigs$ (i.e. $\usig:\Rge \to \uvals$ be an arbitrary measurable control signal that takes on only admissible control values).
For convenience, let $(\xsig_1(\cdot),\xsig_2(\cdot)) \equiv \xsig_{(\ba{x}_1,\bar{x}_2)}^{\usig}(\cdot)$.

Suppose that indeed $\xsig_1(t)^2 + \xsig_2(t)^2 \le 1$ for all $t \ge 0$.
Then $\dot{\xsig}_1(\cdot) \equiv 0$ so that $\xsig_1(\cdot) \equiv 1$ and $\xsig_2(\cdot) \equiv 0$.

It follows that $\int_0^{t} \usig(\ta) \dee \ta = \xsig_2(t) = 0$ for all $t \ge 0$.
Thus $\int_a^b \usig(\ta) \dee \ta = 0$ for all $a,b > 0$.
But the Lebesgue Differentiation Theorem (Theorem 3.21 in \cite{Folland-Real-Analysis}), $\usig(t) = 0$ for a.e. $t \ge 0$, violating the assumption that $\usig(\cdot) \in \{-1,+1\}$.

\section*{Acknowledgment}
The authors thank Dr. Jorge Cort\'{e}s for useful discussions.

\bibliographystyle{ieeetr}
\bibliography{references}
\end{document}